\documentclass[pdflatex,sn-mathphys-ay]{sn-jnl}

\usepackage{graphicx}
\usepackage{multirow}
\usepackage{amsmath,amssymb,amsfonts}
\usepackage{amsthm}
\usepackage{mathrsfs}
\usepackage[title]{appendix}
\usepackage{xcolor}
\usepackage{textcomp}
\usepackage{manyfoot}
\usepackage{booktabs}
\usepackage{algorithm}
\usepackage{algorithmicx}
\usepackage{algpseudocode}
\usepackage{listings}
\usepackage[utf8]{inputenc}

\theoremstyle{thmstyleone}
\newtheorem{theorem}{Theorem}
\newtheorem{proposition}[theorem]{Proposition}

\theoremstyle{thmstylethree}
\newtheorem{definition}[theorem]{Definition}

\theoremstyle{thmstyletwo}

\begin{document}

\title{An Integrated Framework for Rule-Based Data Editing}

\author*[1]{\fnm{Pierpaolo} \sur{Massoli}}
\email{pimassol@istat.it}

\affil[1]{%
\orgdiv{Directorate for Methodology and Statistical
Process Design (DCME)},
\orgname{Italian National Institute of Statistics (ISTAT)},
\orgaddress{\street{Via Cesare Balbo 16},
\city{Rome}, \postcode{00184}, \country{Italy}}}

\abstract{
This paper develops a mathematical framework connecting
structural metadata, logical admissibility, and correction
in statistical data editing.
A normalized routing graph generates a canonical indexed
representation and an associated Horn theory, incorporating
explicit semantic edits through deterministically evaluated
auxiliary variables.
Under specified record semantics, the theory characterizes
admissibility, distinguishes structural minimality from logical
irredundancy, and permits validation without materializing
the deductive closure.
Canonical indices preserve the provenance and dependency
scopes of violations.
With complete diagnostics and strictly positive weights,
a nonnegative violation Hamiltonian vanishes exactly on
admissible corrections within the permitted candidate domain.
Verified zero-energy outputs therefore certify feasibility
and global minimization of violation energy.
Minimum modification cost is formulated separately, with
sufficient conditions for an equivalent penalized objective
and optimality certificates obtained by matching attained
costs with valid lower bounds.
A household survey application verifies admissible corrections
and certifies minimum cost for a subset of returned records
through a retrospective audit.
The framework connects structural requirements to correction
certificates independently of the numerical search procedure,
while statistical accuracy of replacement values requires
separate assessment.
}

\keywords{
Statistical Data Editing,
Questionnaire Routing,
Horn Logic,
Violation Localization,
Minimum-Cost Correction
}

\maketitle


\section{Introduction}

Structured data collection is one of the principal means of
producing information in official statistics, particularly through
questionnaire-based surveys. Information collected through one
item frequently determines which subsequent questions must be
asked and which become structurally inapplicable. This mechanism,
commonly referred to as questionnaire \textit{routing} or skip
patterns, defines admissible paths through the questionnaire and
constrains the records that may be observed \citep{Fagan1988}.
During data collection, routing guides respondents through these
paths; during data editing, it provides requirements against which
the resulting records can be checked. Logical dependencies are
therefore an essential component of data quality assessment in
official statistics
\citep{Fellegi1976,Concord2004,Pannekoek2005,FanGeerts2012}. 
The correspondence between questionnaire items and dataset
variables motivates a broader formulation. The framework
concerns structured datasets whose variables are linked by
explicit structural, logical, and semantic requirements.
Questionnaire routing provides the principal motivating
application, but such requirements also arise in statistical
registers and administrative data, where recorded attributes
may determine the applicability of other fields or impose
deterministic consistency conditions.
Applicability depends on whether these dependencies admit
the normalized representation specified below, supplemented
by explicitly supplied semantic predicates and deterministic
auxiliary definitions. The scope is therefore determined by
the dependency structure rather than by the data source.
Throughout, admissibility means satisfaction of the prescribed
domains and requirements; its scope depends on the constraints
included in the specification. 
The mathematical analysis of questionnaire routing has a long
tradition. Directed graphs may represent questions as vertices and
response-dependent transitions as edges, providing a framework
for analysing skip patterns and structural missingness
\citep{Fagan1988}. Subsequent work extended this representation
to questionnaire design, validation, and data management
\citep{Massoli2008}. 
This modelling approach has also been used to develop
systems for automatically generating edit rules from
questionnaire structures \citep{LoCastroMassoli2021}. 
	Statistical data editing provides a complementary perspective.
The Fellegi--Holt paradigm established a basis for consistency
checking and minimum-change correction through logical edit rules
\citep{Fellegi1976,deWaal2011}. This framework influenced systems
and methodological developments including ISTAT's CONCORD,
the EUREDIT project, and more recent software environments for
rule-based data editing
\citep{Concord2004,Pannekoek2005,deJonge2025,%
vanDerLooDeJonge2021}. 
Research on database repair has likewise examined consistency
restoration through value modifications under specified
constraints \citep{Bertossi2011,BohannonEtAl2005}.
These contributions address the structural organization of data
collection, the specification and verification of edit rules,
and the correction of inconsistent records. The present study
connects these aspects through a common mathematical construction:
a canonical representation of the prescribed dependencies generates
a logical characterization of admissibility, supports violation
localization, and defines an exact target for correction.
The methodological contribution is the formal connection between
the structural origin of the requirements, their record-level
semantics, and the certificates attached to corrected outputs. 
The construction starts from user-supplied structural metadata
specifying variable domains and routing dependencies in the
questionnaire or dataset. The description of the routing 
structure of the data is a user-defined step in the proposed 
framework. 
From these metadata, a normalized directed acyclic graph is
constructed, together with an expanded incidence representation
preserving each modality-specific routing transition.
Canonicity is relative to
the fixed normalization, variable domains, and routing
transitions. 
Explicit semantic edits are incorporated through
deterministically evaluated auxiliary variables, whose dependency
structure augments the graph. The resulting representation
generates a theory of Horn clauses
\citep{Horn1951,Lloyd1987,Miller2006}, retaining the association
between each indexed event and its logical requirements.
The semantic predicates are supplied as part of the specification
and evaluated from the record.
Under the stated record semantics, the paper establishes soundness
and completeness: satisfaction of the generated theory is equivalent
to admissibility within a specified domain of well-formed records.
This characterization includes root activity, support for active
variables, and the prescribed deterministic auxiliary evaluation.
It distinguishes structural minimality, concerning preservation
of indexed events and their provenance, from logical irredundancy,
concerning the effect of clause deletion on the admissible set.
A witness criterion specifies when the latter property holds.
The characterization also justifies validation through the
generating rules without materializing their full deductive
closure. These properties are established under the specified
semantics; Horn form alone does not imply minimality or
completeness of the record characterization.
Canonical indices associate each violation with its originating
requirement and dependency scope. This information supports
correction because changing an activating response can alter
downstream requirements, while changing the input of a semantic
edit can alter its evaluated outcome. Localization identifies
which requirements fail and which variables can affect them.
Candidate corrections must then be evaluated jointly, including
induced routing changes and auxiliary values.
The logical characterization yields an energetic formulation.
A nonnegative violation Hamiltonian is defined from the canonical
diagnostic counts. With strictly positive weights and complete
checks, its zero set is exactly the set of admissible corrections
within the permitted candidate domain. Whenever that set is
nonempty, its members attain the global minimum of violation
energy. A verified zero-energy correction therefore certifies
admissibility and feasibility under its modification restrictions,
independently of the numerical strategy used to obtain it.
The canonical requirements supply the complete diagnostic basis
needed for this certificate.
Minimum intervention introduces a further objective over the
admissible set. Modification costs may count changed coordinates
or distinguish filling, blanking, and replacement. The paper gives
sufficient conditions under which a cost-augmented Hamiltonian
recovers minimum-cost admissible corrections and relates this
formulation to lexicographic prioritization of consistency.
It also formulates an output-specific certificate: an admissible
correction has minimum cost whenever its attained cost matches
a valid lower bound. This separates verification of an individual
correction from guarantees about the search procedure.
An application to an anonymized extract of 15\,000 households
from a real survey illustrates the framework through a reference
implementation. Structural and inference checks examine the
canonical representation and generated rules, while a synthetic
experiment assesses agreement between checking and localization
under controlled alterations. Final routing, semantic,
auxiliary-coherence, and record-structure checks establish
admissibility for 10\,146 corrected records. A retrospective audit
additionally certifies minimum modification cost for 469 of these
records by matching necessary lower bounds, without rerunning
correction. These results illustrate both the zero-energy
characterization and the lower-bound certificate within their
stated scope. The design and convergence analysis of the numerical
search procedure are outside the scope of this study.
Statistical accuracy of replacement values requires separate
assessment.
The remainder of the paper is organized as follows.
Section~\ref{sec. theoretical_background} presents the theoretical
background. Section~\ref{sec:mathematical-framework} develops the
canonical representation, establishes its logical properties,
and connects violation localization to the Hamiltonian,
minimum-change formulation, and optimality certificates.
Section~\ref{sec:empirical-validation} presents the survey
application, implementation checks, and empirical verification
of the corrected records.
The final sections discuss the contributions, limitations,
and directions for further research, and present the main
conclusions.


\section{Theoretical Background}
\label{sec. theoretical_background}

This section introduces the graph-theoretical and logical concepts
used to represent structural dependencies and record consistency.

\subsection{Directed Acyclic Graphs}
A finite directed graph is a pair $G=(V,E)$, where
$V=\{v_1,\ldots,v_n\}$ is the vertex set and
$E\subseteq V\times V$ is the set of directed edges.
A directed path follows successive edges, and a directed cycle
returns to its starting vertex. A \textit{directed acyclic graph}
(DAG) contains no directed cycles and therefore admits a
\textit{topological ordering}: an ordering of its vertices in which
every edge points from an earlier to a later vertex
\citep{Diestel2005}. This ordering need not be unique.
In a routing graph, vertices represent variables and edges
represent possible transitions between them. Several modalities
may select the same transition, so graph adjacency alone does
not identify the response associated with an edge.
A topological ordering places each variable before those reached
through its outgoing transitions.

\subsection{Incidence Matrix Representation}
For a directed graph without self-loops, with $n$ vertices and
$m$ indexed edges, the \textit{incidence matrix}
$B=(b_{ij})\in\{-1,0,+1\}^{n\times m}$ is defined by
\begin{equation}
b_{ij} =
\left\{
\begin{array}{rl}
+1, & \mbox{if edge } e_j \mbox{ leaves vertex } v_i,\\
-1, & \mbox{if edge } e_j \mbox{ enters vertex } v_i,\\
0,  & \mbox{otherwise.}
\end{array}
\right.
\end{equation}
Each column identifies the source and target of one edge.
The same definition extends to a directed multigraph, where
parallel edges are indexed separately and retained as distinct
columns even when their endpoints coincide.
Transposition gives one row per indexed edge.
Ordinary incidence records endpoints but does not explicitly
encode the response modalities selecting a routing transition
or the variables made inactive by that decision.
Section~\ref{sec:mathematical-framework} introduces an expanded
representation retaining this information, with one indexed row
for each modality-specific transition.

\subsection{Horn Logic}
\label{sec:horn_logic}
A \textit{Horn clause} is a disjunction of literals containing at
most one positive literal. It can be written in implication form as
\begin{equation}
a_1 \wedge \cdots \wedge a_k \rightarrow b,
\end{equation}
where the $a_i$ are atoms and $b$ is either an atom or $\bot$.
Atoms may represent variable--value assignments, such as $V_i=a$,
or activity states. The conjunction on the left is the
\textit{antecedent}; the right-hand side is the
\textit{consequent}. An empty antecedent with an atomic consequent
gives an unconditional fact, $\top\rightarrow b$.
A consequent $\bot$ gives an \textit{integrity constraint},
forbidding the simultaneous truth of all antecedent atoms.
A \textit{Horn theory} $\mathcal{H}$ is a finite set of Horn
clauses. An interpretation $x$ is a model of the theory,
written $x\models\mathcal{H}$, if it satisfies every clause.
The notation $\mathcal{H}\models\varphi$ means that every model
of $\mathcal{H}$ also satisfies $\varphi$.
For a finite propositional Horn theory, \textit{forward chaining}
starts from supplied facts and repeatedly adds atomic conclusions
whose antecedents hold. At termination, if no integrity constraint
has its antecedent satisfied, the derived atoms form the least
model of the theory together with those facts
\citep{VanEmden1976}. Otherwise, the theory and facts are jointly
unsatisfiable. Appropriate implementations decide propositional
Horn satisfiability in time linear in the total input size
\citep{Dowling1984}.
These properties support inference over the generated requirements.
Record validation additionally fixes the interpretation of modal
and activity atoms from the observed coordinates, as specified
in Section~\ref{sec:mathematical-framework}.
\subsection{Rule-Based Data Editing}
An edit rule specifies a relationship that a valid record must
satisfy. A collection of edit rules defines an admissible data
space relative to the prescribed requirements.
These may concern routing and structural missingness, or
semantic relationships among recorded values.
Consistency checking evaluates whether a record satisfies
the applicable rules; correction additionally requires selecting
which values to modify and how to replace them.
The Fellegi--Holt framework formulates correction in terms of
restoring consistency while minimizing a specified measure
of modification \citep{Fellegi1976,deWaal2011}.
A violated requirement can involve several variables, and
different changes may resolve the same conflict.
Replacement values must also satisfy the remaining requirements,
including those affected by the changes.


\section{Mathematical Framework}
\label{sec:mathematical-framework}

This section develops the connection between structural
specification, logical admissibility, and correction.
Following the questionnaire representation in
\citet{Massoli2008}, normalization resolves compound routing
constructs, including OR and XOR groups, and introduces auxiliary
variables where needed to preserve deterministic routing and
acyclicity. For routing evaluation, numerical values are mapped
to the finite categories prescribed by the routing specification;
values outside the prescribed domains are represented as
$\mathrm{NULL}$. This routing encoding must be distinguished
from the information required to evaluate semantic edits:
the values needed by their predicates must remain available.
The routing results below concern the normalized modal
representation.
The construction proceeds from indexed routing transitions to
their associated Horn clauses. Explicitly supplied semantic edits
then extend the representation through deterministically evaluated
auxiliaries. Under the stated record semantics, the resulting
theory characterizes admissibility relative to these requirements.
This characterization provides the basis for examining structural
minimality and logical irredundancy, establishing the sufficiency
of the generating rules without materializing their deductive
closure, and translating localized violations into an energy
objective for correction.

\subsection{Canonical Representation of the Routing Structure}
\label{subsec:canonical-routing}
Let $G=(V,E)$ be a finite DAG with a distinguished root $V_\rho$,
variables $V=\{V_1,\ldots,V_n\}$, and finite categorical domains
$D_i=\{1,\ldots,k_i\}$, where $k_i\geq1$.
Fix a topological ordering such that $(V_i,V_j)\in E$ implies
$i<j$. An active variable takes exactly one modality in its
domain; $\mathrm{NULL}$ represents an inactive variable in an
admissible record. 
A modality is a category or value in a variable's prescribed
domain. Throughout, \textit{modal} means specific to such a
modality and does not refer to modal logic. 
The normalized routing specification assigns a unique successor
to each variable--modality pair:
\begin{equation}
\tau:\bigcup_{i=1}^{n}(\{V_i\}\times D_i)
\longrightarrow V\cup\{\mathrm{END}\}.
\label{eq:transition-function}
\end{equation}
Here $\mathrm{END}\notin V$ is the terminal state.
If $\tau(V_i,a)=V_j\in V$, then $(V_i,V_j)\in E$ and hence
$i<j$. Thus every nonterminal successor follows its source
in the topological order.
For each pair $(V_i,a)$, let $S(V_i,a)$ be the complete
set of variables specified as structurally inactive by that
routing decision. The complete transition is
\begin{equation}
T(V_i,a)=\bigl(\tau(V_i,a),S(V_i,a)\bigr).
\label{eq:complete-routing-transition}
\end{equation}
The family $\mathcal T=(T(V_i,a))_{i,a}$ is indexed by source
and modality. Distinct indices are retained even when they specify
the same successor and inactive-variable set.
We assume that the normalized specification is coherent:
every variable in $S(V_i,a)$ is absent from every
root-to-terminal path using that transition.
Starting at the root, successive selected modalities determine
a unique routing path. The path terminates because each
nonterminal transition advances in a finite topological order.
The expanded transposed incidence representation $\mathbf B^T$
retains one indexed row for every modal transition:
\begin{equation}
\begin{aligned}
\operatorname{Rows}(\mathbf B^T)
  &=\{r_{i,a}:V_i\in V,\ a\in D_i\},\\
r_{i,a}
  &=\bigl(V_i,a,\tau(V_i,a),S(V_i,a)\bigr).
\end{aligned}
\label{eq:canonical-modal-row}
\end{equation}
The tuple specifies the information attached to the row,
including its modal identifier. Rows therefore remain distinct
when their numerical incidence patterns coincide.
Terminal transitions are retained as indexed events, although
they activate no further variable.
This representation preserves both the selected successor and
the inactivity requirements of every modal event, information
that graph adjacency alone does not supply. Each indexed
transition corresponds to exactly one row, and its complete
specification can be recovered from that row.
Canonicity is relative to the fixed normalization, domains,
transitions, and topological indexing. It does not assert
uniqueness across different normalizations of the same
questionnaire.

\subsection{Horn Theory Induced by the Routing Structure}
\label{subsec:horn-theory}
Each canonical row specifies the requirements activated by
one observed modality. For $r=r_{i,a}$, write $S_r=S(V_i,a)$.
The corresponding compound routing implication is
\begin{equation}
\varphi_r:\quad V_i=a\Longrightarrow\Psi_r,
\qquad
\Psi_r=\begin{cases}
V_j\wedge\displaystyle\bigwedge_{V_k\in S_r}\neg V_k,
 &\tau(V_i,a)=V_j\in V,\\[4pt]
\displaystyle\bigwedge_{V_k\in S_r}\neg V_k,
 &\tau(V_i,a)=\mathrm{END}.
\end{cases}
\label{eq:compound-routing-implication}
\end{equation}
In these expressions, $V_j$ denotes activity and $\neg V_j$
denotes inactivity, while $V_i=a$ is a modal atom.
The complete record encoding introduced below links their
interpretations: activity and modality are evaluated from the
same recorded coordinate. An empty conjunction is $\top$.
Separating the consequent gives the atomic block
\begin{equation}
\begin{aligned}
\mathcal D(r)&=\begin{cases}
\{V_i=a\Longrightarrow V_j\},&\tau(V_i,a)=V_j\in V,\\
\varnothing,&\tau(V_i,a)=\mathrm{END},
\end{cases}\\
\mathcal S(r)
  &=\{V_i=a\Longrightarrow\neg V_k:V_k\in S_r\},\\
\mathcal R(r)&=\mathcal D(r)\cup\mathcal S(r).
\end{aligned}
\label{eq:atomic-expansion}
\end{equation}
An activation requirement is a definite Horn clause.
An inactivity requirement $V_i=a\Rightarrow\neg V_k$
is equivalently the Horn goal clause
$(V_i=a)\wedge V_k\Rightarrow\bot$.
Thus every member of $\mathcal R(r)$ is Horn, and
$\varphi_r\equiv\bigwedge_{h\in\mathcal R(r)}h$.
A terminal row with $S_r=\varnothing$ generates an empty block,
whose conjunction is $\top$.
The routing Horn theory and its indexed block family are
\begin{equation}
\mathcal H=\bigcup_{r\in\operatorname{Rows}(\mathbf B^T)}
\mathcal R(r),\qquad
\mathfrak R(\mathbf B^T)
 =\bigl(\mathcal R(r)\bigr)_{r\in\operatorname{Rows}(\mathbf B^T)}.
\label{eq:routing-horn-theory}
\end{equation}
These objects serve distinct purposes. The union $\mathcal H$
collects the clauses used for logical satisfaction, whereas
$\mathfrak R(\mathbf B^T)$ retains their association with the
indexed routing events. The generator
$\Gamma(r)=\mathcal R(r)$ assigns one block to each row.
A block may contain several clauses or none, so the number of
rows need not equal $|\mathcal H|$. Repeated clauses are
identified in the union, but their indexed occurrences remain
available in the block family.
\begin{theorem}[Canonical Horn representation]
\label{thm:canonical-horn-representation}
For a fixed normalized routing specification, $\mathbf B^T$,
$\mathfrak R(\mathbf B^T)$, and $\mathcal H$ are uniquely
determined. Each complete modal transition corresponds to
exactly one indexed block preserving its activation and
inactivity requirements.
\end{theorem}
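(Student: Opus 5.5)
The argument is a direct verification from the definitions, organized as a chain of functions. The data fixed by the normalized specification are the variable set $V$ with its topological indexing, the domains $D_i$, the successor map $\tau$ of \eqref{eq:transition-function}, and the inactivity sets $S(V_i,a)$. I will show that each of the three objects is obtained from these data by a well-defined map, with no choices involved. Uniqueness then follows by composition.

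\textbf{Rows of $\mathbf B^T$.} The index set of rows is $\{(i,a):V_i\in V,\ a\in D_i\}$, which is determined by $V$ and the $D_i$. Each entry $r_{i,a}=\bigl(V_i,a,\tau(V_i,a),S(V_i,a)\bigr)$ in \eqref{eq:canonical-modal-row} is the value of the functions $\tau$ and $S$ at $(V_i,a)$. Hence $\mathbf B^T$ is determined, including its modal identifiers.

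\textbf{Blocks and theory.} For each row $r$, the definitions \eqref{eq:atomic-expansion} give $\mathcal D(r)$ by cases on whether $\tau(V_i,a)\in V$ or $\tau(V_i,a)=\mathrm{END}$. These cases are exhaustive and exclusive because $\mathrm{END}\notin V$. The set $\mathcal S(r)$ is given by a set-builder over $S_r$. Thus $\Gamma(r)=\mathcal R(r)$ is a function of the row alone. The indexed family $\mathfrak R(\mathbf B^T)$ is the family $(\Gamma(r))_r$ over a determined index set, and $\mathcal H$ is the union of its members as in \eqref{eq:routing-horn-theory}. The union is well defined as a set: repeated clauses collapse, but this does not affect uniqueness.

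\textbf{Bijective correspondence and preservation.} The complete transitions $T(V_i,a)$ are indexed by the same pairs $(i,a)$ as the rows, so $(i,a)\mapsto r_{i,a}\mapsto\mathcal R(r_{i,a})$ composes two index-preserving maps. This gives exactly one block per transition. Injectivity at the level of the \emph{indexed} family is automatic, since the family is indexed by $(i,a)$ even when blocks coincide as sets, such as two empty terminal blocks. I will stress that the correspondence is with indexed blocks, not with distinct clause sets.

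For preservation, I will check that $\mathcal R(r)$ recovers $\tau(V_i,a)$ and $S_r$ whenever these carry information, and that $\varphi_r\equiv\bigwedge_{h\in\mathcal R(r)}h$, as already noted after \eqref{eq:atomic-expansion}. The activation requirement appears as the unique definite clause, or is absent exactly when the row is terminal. The inactivity requirements appear as the goal clauses $(V_i=a)\wedge V_k\Rightarrow\bot$, one for each $V_k\in S_r$.

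\textbf{Main obstacle.} The delicate point is the phrase ``preserving'': an empty block does not by itself distinguish a terminal transition with $S_r=\varnothing$ from other empty data. I plan to resolve this by reading preservation relative to the index $r$. The block is attached to $r_{i,a}$, and the row retains $\tau$ and $S$. The logical content of $\varphi_r$ is then equivalent to the block's conjunction, and the full specification is recovered from the row. This is consistent with the paper's remark that the complete specification is recoverable from the row.
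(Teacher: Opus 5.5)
Your proposal is correct and follows the paper's own proof. The rows are read off from $\tau$ and $S$; each block is fixed by the atomic expansion, and its conjunction is equivalent to $\varphi_r$; retaining the row indices then determines $\mathfrak R(\mathbf B^T)$, and taking the union determines $\mathcal H$. Your extra remarks on exhaustive cases, index-level injectivity, and empty terminal blocks only spell out what the paper leaves implicit. The empty-block concern is harmless, because together with its index an empty block forces $\tau(V_i,a)=\mathrm{END}$ and $S_r=\varnothing$.
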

\begin{proof}
The construction assigns one row to each indexed transition,
retaining its source, modality, successor, and inactive-variable
set. Equation~\eqref{eq:atomic-expansion} determines the associated
block uniquely, and the conjunction of its clauses is equivalent
to the compound routing implication. Retaining the row indices
determines $\mathfrak R(\mathbf B^T)$; taking the union of the
blocks determines $\mathcal H$.
\end{proof}
The canonical representation therefore generates a logical theory
directly from the routing specification while preserving the
structural provenance of its requirements. Clause satisfaction
can be evaluated from these requirements without traversing the
original graph. 

\subsection{Routing Admissibility and Logical Completeness}
\label{subsec:routing-admissibility}
The generated clauses express the requirements attached to each
modal transition. To establish that their satisfaction
characterizes routing admissibility, their interpretations must
be linked to complete records.
An admissible modal routing path starts at $V_\rho$, follows
$\tau$ at each selected modality, and terminates at $\mathrm{END}$:
\begin{equation}
P=\bigl((V_{i_1},a_1),\ldots,(V_{i_t},a_t),\mathrm{END}\bigr),
\qquad V_{i_1}=V_\rho.
\label{eq:admissible-modal-routing-path}
\end{equation}
Here $a_\ell\in D_{i_\ell}$,
$\tau(V_{i_\ell},a_\ell)=V_{i_{\ell+1}}$ for $\ell<t$, and
$\tau(V_{i_t},a_t)=\mathrm{END}$. Its complete record is
\begin{equation}
x_i(P)=\begin{cases}
a_\ell,&i=i_\ell\text{ for some }\ell,\\
\mathrm{NULL},&V_i\text{ is not visited by }P,
\end{cases}
\qquad
\mathcal A_G=\{x(P):P\text{ is admissible}\}.
\label{eq:admissible-configuration-set}
\end{equation}
Each path determines one record, and its sequence of modal
transitions is recovered by following the recorded modalities
from the root.
\begin{definition}[Well-formed modal record]
\label{def:well-formed-modal-record}
A complete record $x\in\prod_i(D_i\cup\{\mathrm{NULL}\})$ is
well formed if the root is active and every active nonroot
variable $V_j$ has exactly one active modal predecessor:
exactly one pair $(V_i,x_i)$ with $x_i\in D_i$ satisfies
$\tau(V_i,x_i)=V_j$.
The set of such records is denoted by $\mathcal X_G$.
\end{definition}
The coordinate encoding assigns exactly one modality to each
active variable and $\mathrm{NULL}$ to each inactive variable.
Specifically, the modal atom $V_i=a$ is true exactly when
$x_i=a$, and the activity atom $V_i$ is true exactly when
$x_i\in D_i$. These atoms are therefore evaluated from the
coordinates rather than assigned independently.
The predecessor condition excludes unsupported active variables;
it does not by itself require every selected successor to be
active. The generated clauses impose that requirement.
Identifying each record with its induced interpretation, define
\begin{equation}
\operatorname{Mod}_G(\mathcal H)
 =\operatorname{Mod}(\mathcal H)\cap\mathcal X_G.
\label{eq:modal-model-set}
\end{equation}
\begin{theorem}[Soundness and logical completeness]
\label{thm:soundness-completeness}
For the normalized deterministic routing specification described
above, the canonical Horn theory characterizes precisely the
admissible configurations within $\mathcal X_G$:
\begin{equation}
\mathcal A_G=\operatorname{Mod}_G(\mathcal H).
\label{eq:exact-logical-characterization}
\end{equation}
\end{theorem}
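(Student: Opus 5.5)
We prove the two inclusions separately, using only the coordinate encoding of atoms and the coherence assumption on the normalized specification.

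\emph{Soundness} ($\mathcal A_G\subseteq\operatorname{Mod}_G(\mathcal H)$). Fix an admissible path $P$ as in \eqref{eq:admissible-modal-routing-path} and its record $x=x(P)$.

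First we check that $x\in\mathcal X_G$. The root is active because $V_{i_1}=V_\rho$. Any other active variable is some $V_{i_{\ell+1}}$, and $(V_{i_\ell},a_\ell)$ is a modal predecessor of it.

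Uniqueness of that predecessor holds because the only active pairs are the pairs $(V_{i_m},a_m)$ lying on $P$. Their successors are pairwise distinct: they are $V_{i_2},\dots,V_{i_t},\mathrm{END}$, and the indices strictly increase along the topological order.

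Next we check the clauses. Only blocks $\mathcal R(r_{i_\ell,a_\ell})$ have true antecedents, since $V_i=a$ holds only at visited coordinates. Each activation clause in such a block holds because $\tau(V_{i_\ell},a_\ell)=V_{i_{\ell+1}}$ is visited, hence active. Each inactivity clause holds by coherence: every $V_k\in S(V_{i_\ell},a_\ell)$ is absent from the root-to-terminal path $P$, which uses that transition, so $x_k=\mathrm{NULL}$. By Theorem~\ref{thm:canonical-horn-representation} the blocks exhaust $\mathcal H$, so $x\models\mathcal H$.

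\emph{Completeness} ($\operatorname{Mod}_G(\mathcal H)\subseteq\mathcal A_G$). Take $x\in\mathcal X_G$ with $x\models\mathcal H$. We reconstruct a path greedily.

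Start with $V_{i_1}=V_\rho$, which is active by well-formedness, and set $a_1=x_{i_1}\in D_{i_1}$. Given an active $V_{i_\ell}$ with modality $a_\ell$, the atom $V_{i_\ell}=a_\ell$ is true. If $\tau(V_{i_\ell},a_\ell)=V_j\in V$, the activation clause forces $V_j$ to be active, and we set $i_{\ell+1}=j$ and $a_{\ell+1}=x_j$. Otherwise the successor is $\mathrm{END}$ and we stop.

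Indices strictly increase, so the process terminates and yields an admissible path $P$. It remains to show $x=x(P)$. The two records agree on visited coordinates by construction, so the only thing to prove is that every unvisited variable is inactive in $x$.

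\emph{The main obstacle} is exactly this step: excluding spurious active variables off the path. We argue by minimal counterexample in the topological order.

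Suppose some $V_j\notin P$ is active, and choose the smallest such $j$. By well-formedness, $V_j$ has a unique active modal predecessor $(V_i,x_i)$ with $\tau(V_i,x_i)=V_j$ and $i<j$.

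If $V_i$ were off $P$, it would be a smaller active off-path variable, contradicting the choice of $j$. So $V_i=V_{i_\ell}$ lies on $P$ with $x_i=a_\ell$. But then $\tau(V_i,a_\ell)=V_{i_{\ell+1}}$, which is on $P$, contradicting $V_j\notin P$. Hence all unvisited coordinates are $\mathrm{NULL}$, and $x=x(P)\in\mathcal A_G$.

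This direction does not use the inactivity clauses. They are needed only for soundness, where they are consistent with admissible records by coherence. The well-formedness condition is what replaces them in excluding unsupported activity.
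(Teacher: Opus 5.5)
Your proof is correct and follows the paper's argument. Soundness rests on coherence and on false antecedents off the path; completeness follows the successors forced by the destination clauses from the root, then excludes off-path active variables using predecessor support. Your minimal-counterexample step is a repackaging of the paper's backward tracing of predecessors to the root, and only the existence of a predecessor is used. You omit the case where the on-path predecessor is the last event $V_{i_t}$, but it is immediate, since then $\tau(V_{i_t},a_t)=\mathrm{END}\neq V_j$.
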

\begin{proof}
Let $x=x(P)\in\mathcal A_G$. At every visited modal event,
a nonterminal successor is the next active variable on $P$.
By coherence of the routing specification, every variable in
the corresponding inactive-variable set is absent from $P$.
The associated block is therefore satisfied.
All remaining blocks have false antecedents, so
$x\models\mathcal H$. The root is active, and the selected
successors along $P$ give each active nonroot variable exactly
one active modal predecessor. Hence $x\in\mathcal X_G$.
Conversely, let $x\in\mathcal X_G$ satisfy $\mathcal H$.
Starting at the active root, its recorded modality selects a
unique successor. If the successor is a variable, the destination
clause requires it to be active. Repeating this argument
constructs a path $P$ ending at $\mathrm{END}$, since each
nonterminal transition advances in a finite topological order.
Suppose an active variable were absent from $P$.
Tracing its active modal predecessors backwards must reach
the root: the indices strictly decrease, and every active
nonroot variable has a predecessor. Reversing this chain gives
a path of selected transitions from the root to that variable.
Since each active variable has one recorded modality and one
selected successor, this chain must be an initial segment of
$P$, a contradiction.
Thus the active coordinates of $x$ are exactly the variables
visited by $P$, with the same modalities.
Consequently, $x=x(P)\in\mathcal A_G$.
\end{proof}
The theorem establishes both directions needed for exact
validation: every admissible record satisfies the generated
theory, and every well-formed record satisfying that theory
is admissible. Completeness here concerns this characterization
of records relative to the specified routing system.
The restriction to $\mathcal X_G$ is essential.
For example, an all-inactive record satisfies every routing
implication through a false antecedent, but its root is inactive.
For unrestricted input records, the coordinate domains,
root activity, and predecessor support must therefore be checked
in addition to the clauses.
The same conditions must be enforced or diagnosed when the
characterization is used to define a correction objective.

\subsection{Embedding Semantic Edits into the Canonical Framework}
\label{subsec:semantic-embedding}
Semantic edits express logical or arithmetic requirements
that need not follow from routing. They are supplied explicitly
and incorporated into the canonical representation through
deterministically evaluated auxiliary nodes.
Their evaluation requires domains retaining the information
used by the predicates. Accordingly, the finite domains $D_i$
are assumed to distinguish values whenever their distinction
can affect a specified semantic outcome. If a routing category
groups such values, it must be refined for joint validation;
the refined values retain the same routing successor and
inactivity requirements. The results above continue to apply
to these finite domains. Numerical codes in $D_i$ identify
the represented values; arithmetic predicates operate on those
values rather than on arbitrary category labels.
For edit $q$, let $I_q$ be its input-coordinate scope and let
\begin{equation}
\phi^{(q)}:\prod_{i\in I_q}(D_i\cup\{\mathrm{NULL}\})
 \longrightarrow\{1,2,\mathrm{NULL}\}
\label{eq:semantic-predicate}
\end{equation}
be its deterministic evaluation map.
For the conditional convention considered here, outcome $1$
means that the edit applies and is satisfied; outcome $2$ means
that it is inactive or cannot be evaluated because a required
input is unobserved; $\mathrm{NULL}$ means that it applies,
is evaluable, and is violated.
No particular algebraic form of the predicate is required.
Let $A_q\subseteq\{1,2\}$ be the nonempty accepted-outcome set,
and define the forbidden outcomes by
\begin{equation}
F_q=\bigl(\{1,2\}\setminus A_q\bigr)\cup\{\mathrm{NULL}\}.
\label{eq:forbidden-semantic-outcomes}
\end{equation}
The stated convention uses $A_q=\{1,2\}$ and
$F_q=\{\mathrm{NULL}\}$.
Acceptance of outcome $2$ establishes no semantic violation;
it does not establish the truth of an unevaluable substantive
condition. Missing inputs required by routing remain subject
to the routing requirements.
Introduce an auxiliary node $E^{(q)}$ with evaluated value
$e^{(q)}(x)=\phi^{(q)}((x_i)_{i\in I_q})$.
Its evaluation domain is $\{1,2,\mathrm{NULL}\}$, while its
accepted values belong to $A_q$.
Here $\mathrm{NULL}$ encodes a rejected semantic outcome;
it is distinct in meaning from structural inactivity of a
routing variable. The atom $E^{(q)}=\mathrm{NULL}$ is true
exactly when the evaluation returns that outcome.
Thus evaluation determines the outcome, and validation determines
whether that outcome is allowed.
For $m$ edits, define
\begin{equation}
x^+=\bigl(x,e^{(1)}(x),\ldots,e^{(m)}(x)\bigr),
\qquad
\mathcal X_G^+=\{x^+:x\in\mathcal X_G\}.
\label{eq:augmented-modal-record}
\end{equation}
Membership in $\mathcal X_G^+$ requires this deterministic
augmentation: semantic auxiliary values cannot be assigned
independently of their inputs.
The extended DAG $G^+=(V^+,E^+)$ has
$V^+=V\cup\{E^{(1)},\ldots,E^{(m)}\}$ and adds edges
$V_i\to E^{(q)}$ for $i\in I_q$.
These edges encode functional dependencies rather than routing
successors. Placing each auxiliary after its inputs preserves
acyclicity. Dependence on previously defined auxiliaries can
likewise be accommodated by evaluating an acyclic dependency
structure in topological order. In that case, composition of
the evaluation maps expresses each outcome as a deterministic
function of its underlying record inputs.
For every forbidden event $E^{(q)}=b$, introduce an indexed
semantic-validation row $s_{q,b}$.
The extended representation and theory are
\begin{equation}
\begin{aligned}
\operatorname{Rows}(\mathbf B^{T,+})
 &=\operatorname{Rows}(\mathbf B^T)
   \mathbin{\dot\cup}\{s_{q,b}:1\leq q\leq m,\ b\in F_q\},\\
\mathcal R^+(s_{q,b})
 &=\{E^{(q)}=b\Longrightarrow\bot\},\\
\mathcal H^+
 &=\bigcup_{u\in\operatorname{Rows}(\mathbf B^{T,+})}
   \mathcal R^+(u),
\end{aligned}
\label{eq:extended-horn-theory}
\end{equation}
where $\mathcal R^+(r)=\mathcal R(r)$ for routing rows.
The disjoint union preserves the distinction between routing
events and semantic rejection events.
Each semantic clause is a Horn integrity constraint forbidding
an evaluated outcome; it does not modify the routing function
$\tau$. The semantic rows index validation events, while the
added dependency edges identify the inputs of their evaluation.
An augmented record is jointly admissible when its original
component follows an admissible routing path and every evaluated
edit has an accepted outcome:
\begin{equation}
\mathcal A^+
 =\{x^+:x\in\mathcal A_G,\
 e^{(q)}(x)\in A_q\text{ for every }q\}.
\label{eq:joint-admissible-set}
\end{equation}
\begin{theorem}[Reduction to extended Horn consistency]
\label{thm:semantic-reduction}
For every deterministic augmentation $x^+\in\mathcal X_G^+$,
\begin{equation}
x^+\models\mathcal H^+
\quad\Longleftrightarrow\quad x^+\in\mathcal A^+.
\label{eq:semantic-reduction}
\end{equation}
\end{theorem}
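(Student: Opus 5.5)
The plan is to split the extended theory along the disjoint union of rows in \eqref{eq:extended-horn-theory}, so that
\begin{equation*}
\mathcal H^+=\mathcal H\cup\mathcal H_{\mathrm{sem}},\qquad
\mathcal H_{\mathrm{sem}}=\{E^{(q)}=b\Rightarrow\bot:1\le q\le m,\ b\in F_q\},
\end{equation*}
and then treat the two components separately. The key point is that their atoms do not overlap. The routing clauses mention only modal and activity atoms of the original variables $V_i$. The semantic clauses mention only auxiliary atoms $E^{(q)}=b$. Hence, for an augmented record $x^+=(x,e^{(1)}(x),\ldots,e^{(m)}(x))$, we have $x^+\models\mathcal H^+$ if and only if both $x\models\mathcal H$ and $x^+\models\mathcal H_{\mathrm{sem}}$. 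Here $x\models\mathcal H$ refers to the restriction of the interpretation to the original coordinates. This holds because satisfaction of a routing clause by $x^+$ depends only on the coordinates of $x$.

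For the routing component, the hypothesis $x^+\in\mathcal X_G^+$ means that $x^+$ is the deterministic augmentation of some $x\in\mathcal X_G$. The original part of a record in $\mathcal X_G^+$ determines it uniquely, since the auxiliaries are functions of $x$. Theorem~\ref{thm:soundness-completeness} then gives $x\models\mathcal H$ if and only if $x\in\mathcal A_G$.

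For the semantic component, I would argue edit by edit. Under the convention that the atom $E^{(q)}=b$ is true exactly when $e^{(q)}(x)=b$, the clause $E^{(q)}=b\Rightarrow\bot$ holds exactly when $e^{(q)}(x)\neq b$. So $x^+$ satisfies all semantic clauses for edit $q$ if and only if $e^{(q)}(x)\notin F_q$. The evaluation domain is $\{1,2,\mathrm{NULL}\}$, and by \eqref{eq:forbidden-semantic-outcomes} the set $F_q$ is exactly its complement of $A_q$ within that domain, because $A_q\subseteq\{1,2\}$. Hence $e^{(q)}(x)\notin F_q$ if and only if $e^{(q)}(x)\in A_q$. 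Taking the conjunction over $q$ and combining with the routing component yields membership in $\mathcal A^+$ as defined in \eqref{eq:joint-admissible-set}.

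The only delicate step is the claim that the auxiliary atoms are evaluated rather than freely assigned. This is what allows the semantic clauses to be read as conditions on $x$. It rests on restricting to $\mathcal X_G^+$, and I would state it explicitly. I would also note why restricting to $\mathcal X_G^+$ is needed: without it, a record could set $E^{(q)}$ to an accepted value independently of its inputs and satisfy $\mathcal H_{\mathrm{sem}}$ vacuously. If auxiliaries depend on earlier auxiliaries, I would first invoke the remark that composition in topological order makes each $e^{(q)}$ a function of $x$, and the argument is then unchanged.
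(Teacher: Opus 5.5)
Your proposal is correct and follows essentially the same route as the paper. You split $\mathcal H^+$ into the routing clauses, handled by Theorem~\ref{thm:soundness-completeness} on the original component $x\in\mathcal X_G$, and the semantic rejection clauses, which hold exactly when each deterministically evaluated outcome avoids $F_q$, i.e.\ lies in $A_q$; your remarks on atom disjointness and on why the restriction to $\mathcal X_G^+$ is needed are sound elaborations of points the paper leaves implicit or states only after the proof.
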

\begin{proof}
Since $x^+\in\mathcal X_G^+$, its original component belongs
to $\mathcal X_G$ and its semantic outcomes are evaluated
from that component.
By Theorem~\ref{thm:soundness-completeness}, the routing clauses
hold precisely when $x\in\mathcal A_G$.
For each edit $q$, deterministic evaluation assigns exactly one
outcome in $\{1,2,\mathrm{NULL}\}$.
Its rejection clauses all hold precisely when this outcome
does not belong to $F_q$, equivalently when it belongs to $A_q$.
Satisfaction of the complete theory is therefore equivalent
to the two conditions defining $\mathcal A^+$.
\end{proof}
Consequently,
$\operatorname{Mod}_{G^+}(\mathcal H^+)
:=\operatorname{Mod}(\mathcal H^+)\cap\mathcal X_G^+
=\mathcal A^+$.
The extended theory characterizes admissibility relative to
the routing specification, semantic evaluation maps, and accepted
outcomes. Once these are fixed, the indexed construction is
canonical.
The reduction uses deterministic evaluation to connect substantive
predicates to Horn rejection clauses. It does not require
arbitrary arithmetic predicates themselves to be Horn formulae.
This distinction also explains why assigning an accepted value
directly to an auxiliary would not establish admissibility:
the value must be the outcome evaluated from the record.

\subsection{Minimality and Closure-Free Consistency}
\label{subsec:minimality-closure-free}
The logical characterization distinguishes preservation of the
specified events from preservation of their admissible models.
It also makes it possible to determine whether derived
consequences must be added for complete validation.
The extended indexed basis
\begin{equation}
\mathfrak B^+=\bigl(\mathcal R^+(u)\bigr)_
 {u\in\operatorname{Rows}(\mathbf B^{T,+})}
\label{eq:extended-indexed-basis}
\end{equation}
retains one block for each specified routing transition or
semantic rejection event.
Deleting a block removes its event from this complete indexed
representation. The basis is therefore \textit{structurally
minimal} with respect to retaining every specified event and
its provenance in this representation.
This includes empty terminal blocks: their indices retain
routing events even though they contribute no clauses.
This minimality concerns the prescribed event indexing; it
does not assert the smallest number of clauses needed to
characterize admissibility.
Logical irredundancy concerns the records admitted by the
clauses. The unindexed theory may contain clauses entailed
by others, or clauses whose requirements already follow
from the record domain.
The routing construction illustrates this distinction.
Within $\mathcal X_G$, the destination clauses alone reconstruct
a complete root-to-terminal path, by the converse argument
of Theorem~\ref{thm:soundness-completeness}.
Coherence of the inactive-variable sets then ensures satisfaction
of every skip clause. Thus the skip clauses are consequences
of the destination clauses relative to $\mathcal X_G$,
although their indexed occurrences retain the explicit inactivity
requirements and support direct diagnosis on unrestricted input
records. Their diagnostic role does not require logical
irredundancy within the well-formed domain.
Irredundancy relative to $\mathcal X_G^+$ requires that deleting
any $h\in\mathcal H^+$ strictly enlarges the relative model set:
\begin{equation}
\operatorname{Mod}(\mathcal H^+)\cap\mathcal X_G^+
\neq
\operatorname{Mod}(\mathcal H^+\setminus\{h\})
 \cap\mathcal X_G^+.
\label{eq:relative-logical-irredundancy}
\end{equation}
\begin{proposition}[Witness criterion for logical irredundancy]
\label{prop:witness-irredundancy}
The theory is logically irredundant relative to $\mathcal X_G^+$
if and only if every $h\in\mathcal H^+$ admits a witness
$x_h^+\in\mathcal X_G^+$ such that
\begin{equation}
x_h^+\models\mathcal H^+\setminus\{h\},
\qquad x_h^+\not\models h.
\label{eq:irredundancy-witness}
\end{equation}
\end{proposition}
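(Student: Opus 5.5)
The plan is to prove the equivalence clause by clause, using only the definition in \eqref{eq:relative-logical-irredundancy}. Fix $h\in\mathcal H^+$ and compare the two relative model sets
\[
M=\operatorname{Mod}(\mathcal H^+)\cap\mathcal X_G^+,
\qquad
M_h=\operatorname{Mod}(\mathcal H^+\setminus\{h\})\cap\mathcal X_G^+.
\]
Removing a clause can only weaken the theory, so every model of $\mathcal H^+$ is a model of $\mathcal H^+\setminus\{h\}$, and $M\subseteq M_h$. This holds for every $h$. The condition $M\neq M_h$ is therefore equivalent to strict inclusion $M\subsetneq M_h$, that is, to the existence of some element of $M_h\setminus M$.

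Next I identify the elements of $M_h\setminus M$. Such an $x^+$ lies in $\mathcal X_G^+$, satisfies $\mathcal H^+\setminus\{h\}$, and fails to satisfy $\mathcal H^+$. Since $\mathcal H^+=(\mathcal H^+\setminus\{h\})\cup\{h\}$ and $x^+$ already satisfies every clause other than $h$, failure on $\mathcal H^+$ means exactly that $x^+\not\models h$. So $x^+$ is a witness in the sense of \eqref{eq:irredundancy-witness}. Conversely, any witness $x_h^+$ satisfies $\mathcal H^+\setminus\{h\}$ and lies in $\mathcal X_G^+$, so $x_h^+\in M_h$. Because $x_h^+\not\models h$, it is not a model of $\mathcal H^+$, so $x_h^+\notin M$. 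Hence $M_h\setminus M$ is precisely the set of witnesses for $h$.

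Combining the two steps, the condition in \eqref{eq:relative-logical-irredundancy} holds for $h$ if and only if $h$ admits a witness. Taking this over all $h\in\mathcal H^+$ gives the proposition.

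The only point that needs care concerns the domain $\mathcal X_G^+$. Membership is fixed by the deterministic augmentation and does not depend on which clauses are kept. The intersection with $\mathcal X_G^+$ is therefore the same on both sides, and monotonicity survives the relativization. I also need $\mathcal H^+$ to be a set with repeated clauses identified. Then deleting $h$ removes every occurrence of it, and the decomposition $\mathcal H^+=(\mathcal H^+\setminus\{h\})\cup\{h\}$ is valid. No earlier theorem is needed beyond the definition of satisfaction of a finite set of clauses as satisfaction of each member.
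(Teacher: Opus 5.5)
Your proposal is correct and follows the same argument as the paper: deleting a clause can only enlarge the relative model set, and the enlargement is strict exactly when some record in $\mathcal X_G^+$ satisfies $\mathcal H^+\setminus\{h\}$ but violates $h$. You make explicit two points the paper leaves implicit, namely that $M_h\setminus M$ is precisely the set of witnesses and that the restriction to $\mathcal X_G^+$ does not depend on which clauses are deleted.
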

\begin{proof}
Deleting a clause can only enlarge the relative model set.
The enlargement is strict exactly when a record in
$\mathcal X_G^+$ satisfies all remaining clauses but violates
the deleted clause.
\end{proof}
The criterion identifies the evidence needed for a record-level
irredundancy claim. A deletion-based inference test establishes
that claim only if it also supplies the required witnesses or
otherwise establishes their existence in $\mathcal X_G^+$.
Non-derivability under a particular inference procedure alone
does not provide this guarantee.
For the full canonical theory, the presence of skip clauses
already illustrates why structural minimality must not be
identified with relative logical irredundancy.
The sufficiency of the generating theory is independent of
either minimality property. Let
$\operatorname{Cn}(\mathcal H^+)
=\{g:\mathcal H^+\models g\}$
denote its deductive closure.
\begin{theorem}[Closure-free consistency]
\label{thm:closure-free-consistency}
For every $x^+\in\mathcal X_G^+$,
\begin{equation}
x^+\in\mathcal A^+
\quad\Longleftrightarrow\quad x^+\models\mathcal H^+
\quad\Longleftrightarrow\quad
x^+\models\operatorname{Cn}(\mathcal H^+).
\label{eq:closure-free-equivalence}
\end{equation}
\end{theorem}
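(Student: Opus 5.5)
The first equivalence is already available: Theorem~\ref{thm:semantic-reduction} gives $x^+\in\mathcal A^+\iff x^+\models\mathcal H^+$ for every $x^+\in\mathcal X_G^+$. The remaining task is the second equivalence, $x^+\models\mathcal H^+\iff x^+\models\operatorname{Cn}(\mathcal H^+)$. I would prove it by working with the induced interpretation of the augmented record and never materializing the closure.

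For the forward direction, I would first fix how $x^+$ is read as an interpretation. The modal atoms, activity atoms and semantic outcome atoms are all evaluated from its coordinates, as specified before~\eqref{eq:modal-model-set} and~\eqref{eq:augmented-modal-record}. Suppose $x^+\models\mathcal H^+$ and take any $g\in\operatorname{Cn}(\mathcal H^+)$. By definition $\mathcal H^+\models g$, so every model of $\mathcal H^+$ satisfies $g$. Since $x^+$ is such a model, $x^+\models g$, and hence $x^+\models\operatorname{Cn}(\mathcal H^+)$.

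The converse follows from $\mathcal H^+\subseteq\operatorname{Cn}(\mathcal H^+)$, since each clause is entailed by the theory containing it. Thus satisfying the closure implies satisfying $\mathcal H^+$.

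The only point needing care is that entailment $\models$ ranges over all interpretations, while $x^+$ is a specific, coordinate-induced one. The forward direction therefore needs only that $x^+$ is an interpretation of the atoms of $\mathcal H^+$, which it is. The restriction to $\mathcal X_G^+$ enters only through Theorem~\ref{thm:semantic-reduction}, not through the closure step.

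I would close by noting the consequence for validation. Checking the finitely many generating clauses of $\mathcal H^+$, for instance by linear-time evaluation of each clause on the fixed interpretation, already decides $x^+\in\mathcal A^+$. No derived consequence adds validating power on $\mathcal X_G^+$.
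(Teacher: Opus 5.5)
Your proposal is correct and matches the paper's proof. You take the first equivalence from Theorem~\ref{thm:semantic-reduction}, and you obtain the second from the fact that a model of $\mathcal H^+$ satisfies every consequence in $\operatorname{Cn}(\mathcal H^+)$, together with the inclusion $\mathcal H^+\subseteq\operatorname{Cn}(\mathcal H^+)$.
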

\begin{proof}
The first equivalence is Theorem~\ref{thm:semantic-reduction}.
Every model of $\mathcal H^+$ satisfies all its logical
consequences. Conversely, satisfaction of the closure implies
satisfaction of $\mathcal H^+$, since the theory is contained
in its closure.
\end{proof}
The generating rules thus provide a complete admissibility test
within $\mathcal X_G^+$. Materializing further logical
consequences cannot exclude any additional record already
satisfying those rules. This follows from the established
admissibility characterization and the model-preserving property
of deductive closure.
It does not require the generating theory to contain every
consequence explicitly or to satisfy
$\mathcal H^+=\operatorname{Cn}(\mathcal H^+)$.
A finite operational closure produced by an implementation
must likewise be distinguished from the full deductive closure.
In particular, the direct violation count
\begin{equation}
\nu(x^+)=\sum_{h\in\mathcal H^+}
 \mathbf1\{x^+\not\models h\}
\label{eq:direct-violation-count}
\end{equation}
vanishes exactly on $\mathcal A^+$ within $\mathcal X_G^+$.
This count concerns distinct clauses in the unindexed theory.
Counting their indexed occurrences instead preserves provenance
and may change positive counts when clauses are repeated, but
has the same zero set when all generating requirements are
included.
This provides the logical basis for the energetic formulation:
a complete count of generating violations identifies admissibility
through its zero set. Additional derived clauses are unnecessary
for that purpose, although counting their violations can change
the positive energies assigned to inconsistent records.
The logical characterization therefore determines the admissible
zero set, while the chosen diagnostic occurrences determine
the violation counts outside it.

\subsection{Localization of Data Violations}
\label{subsec:canonical-localization}
The canonical representation supports direct evaluation of the
generating requirements while retaining the origin of each
violation. For an input record
$x\in\prod_{i=1}^n D_i^+$, where
$D_i^+=D_i\cup\{\mathrm{NULL}\}$, define the modal encoding
\begin{equation}
z_{i,a}(x)=\mathbf1\{x_i=a\},\qquad
\mathbf z(x)=\bigl(z_{i,a}(x)\bigr)_{i,a}^{T},\qquad
\sum_{a\in D_i^+}z_{i,a}(x)=1.
\label{eq:complete-modal-encoding}
\end{equation}
These indicators can be evaluated for any record in the product
domain. Well-formedness remains an additional requirement for
the admissibility characterization.
For a routing row $u=r_{i,a}$, let
$\alpha_u(x)=z_{i,a}(x)$ indicate whether its antecedent holds.
Let $J_u$ contain the indices of its specified destination,
if any, and of the variables in its inactive-variable set.
For $j\in J_u$, write $e_{u,j}=1$ when activity is required
and $e_{u,j}=0$ when inactivity is required.
With $o_j(x)=\mathbf1\{x_j\neq\mathrm{NULL}\}$, define
\begin{equation}
d_{u,j}(x)=\begin{cases}
\alpha_u(x)\mathbf1\{o_j(x)\neq e_{u,j}\},&j\in J_u,\\
0,&j\notin J_u.
\end{cases}
\label{eq:row-variable-localization}
\end{equation}
Thus $d_{u,j}(x)=1$ precisely when the modal event is activated
and variable $j$ fails its required status.
For $e_{u,j}=1$, this is a missing required value;
for $e_{u,j}=0$, it is a present value required to be blank.
An unactivated row contributes no violation.
The diagnostic matrix and routing localization set are
\begin{equation}
\mathbf D(x)=[d_{u,j}(x)],\qquad
\mathcal L_{\mathrm{route}}(x)=\{(u,j):d_{u,j}(x)=1\}.
\label{eq:routing-localization-set}
\end{equation}
Rows retain the originating modal events, and columns identify
the affected variables. The sum $\sum_{u,j}d_{u,j}(x)$ counts
violated indexed routing requirements, whereas
$\ell_j^{\mathrm{route}}(x)
=\mathbf1\{\sum_u d_{u,j}(x)>0\}$ flags an affected variable.
Several requirements may therefore contribute to the same
variable's diagnostic status.
All quantities in
Equation~\eqref{eq:row-variable-localization}
are obtained from the canonical rows and the record encoding.
The computation selects activated rows and compares their
expected statuses with the observed ones. It requires neither
materializing deductive consequences nor explicitly expanding
the rows into atomic clauses, unless separate clause identifiers
are needed.
For each semantic edit, define
\begin{equation}
v_q^{\mathrm{sem}}(x)=\mathbf1\{e^{(q)}(x)\in F_q\},\qquad
\mathcal L_q^{\mathrm{sem}}(x)=\begin{cases}
I_q,&v_q^{\mathrm{sem}}(x)=1,\\
\varnothing,&v_q^{\mathrm{sem}}(x)=0.
\end{cases}
\label{eq:semantic-edit-violation}
\end{equation}
The outcome is evaluated from the record, and $I_q$ identifies
the inputs of the failed edit. If an input is derived, its
dependency structure identifies the underlying coordinates.
Under the conditional convention $A_q=\{1,2\}$, an unevaluable
edit has an accepted outcome; routing validation separately
checks whether its missing inputs are required.
The combined diagnostic output is
\begin{equation}
\begin{aligned}
\mathcal L(x)={}&
 \{(\mathrm{route},u,j):
   (u,j)\in\mathcal L_{\mathrm{route}}(x)\}\\
 &\cup\{(\mathrm{semantic},q,I_q):
   v_q^{\mathrm{sem}}(x)=1\}.
\end{aligned}
\label{eq:complete-canonical-localization}
\end{equation}
Each nonzero indicator identifies a violated generating
requirement, and every violated generating requirement is
represented. Consequently, for $x^+\in\mathcal X_G^+$,
\begin{equation}
\mathcal L(x)=\varnothing
\quad\Longleftrightarrow\quad x^+\models\mathcal H^+
\quad\Longleftrightarrow\quad x^+\in\mathcal A^+.
\label{eq:localization-consistency}
\end{equation}
This equivalence follows from
Theorem~\ref{thm:semantic-reduction}.
For unrestricted inputs, the coordinate domains, root activity,
and predecessor support must also be verified before declaring
admissibility.
Localization identifies failed requirements and where they
manifest themselves; it does not determine which original cells
are erroneous. For $V_i=a\Rightarrow V_j$, failure manifests
in $V_j$, but changing either the activating modality or the
affected value may remove it. 

\subsection{From Localization to a Correction Formulation}
\label{subsec:localization-correction}

Correction seeks replacement values that jointly satisfy the
specified requirements while limiting changes to the observation.
The diagnostic output provides necessary information about where
such changes must occur, but their adequacy depends on the
resulting complete record.
Let $x$ be the observed record and
$\mathcal L(x)=\{\lambda_1,\ldots,\lambda_s\}$.
For each conflict $\lambda_r$, let $C_r$ contain all editable
input coordinates on which its violated condition depends.
Deterministically recomputed inputs are expanded to their
underlying dependencies; independently adjustable coordinates
are retained in the scope.
This scope includes inputs affecting whether the requirement
is activated, as well as inputs affecting its satisfaction.
For a direct routing requirement $V_i=a\Rightarrow V_j$
on ordinary variables, these are the editable members of
$\{i,j\}$. For a semantic edit, they are the editable inputs
in its scope, expanded to underlying inputs where necessary.
Dependencies determining whether a derived coordinate is
active or legitimately blank must also be included.
If none of these coordinates changes, the violated condition
retains its truth value. Every successful correction must
therefore change at least one coordinate in each scope.
Writing $S$ for its changed-input set gives
\begin{equation}
S\cap C_r\neq\varnothing,\qquad r=1,\ldots,s.
\label{eq:coverage-condition}
\end{equation}
With nonnegative cell costs $c_j$, the weighted hitting-set
problem
\begin{equation}
\min_S\left\{\sum_{j\in S}c_j:
 S\cap C_r\neq\varnothing\text{ for every }r\right\}
\label{eq:weighted-hitting-set}
\end{equation}
optimizes over subsets of the editable input coordinates.
Its constraints are necessary for correction: they identify
sets meeting all conflict scopes without establishing that
jointly admissible replacement values exist.
A change may resolve several conflicts, require coordinated
changes, or introduce new violations.
An empty $C_r$ means that the violated condition cannot change
under the prescribed editability restrictions, provided its
full dependency scope has been included.
Equal positive costs give minimum-cardinality coverage;
unequal costs may reflect field reliability or substantive
importance.
The optimum is a lower bound on a record-level modification
cost when, for every admissible correction, that cost is at
least the sum of the $c_j$ over its changed input coordinates.
This requires compatible accounting: an uncharged coordinate
must receive zero coverage cost, and each positive $c_j$ must
not exceed the cost of any permitted change to coordinate $j$.
Without this relation, the coverage problem still supplies
necessary change sets but does not certify a lower bound for
a separately defined cost.
A complete formulation optimizes over replacement records.
Let $\mathcal Y(x)$ be the candidate set incorporating permitted
values and fixed fields. For each candidate $y$, compute its
semantic augmentation $y^+$ from its inputs.
Given a nonnegative modification cost $C(y\mid x)$, define
\begin{equation}
\begin{aligned}
&\min_{y\in\mathcal Y(x)} C(y\mid x)\\
&\text{subject to }y^+\in\mathcal A^+
 \text{ and auxiliary coherence}.
\end{aligned}
\label{eq:admissible-correction-problem}
\end{equation}
Membership in $\mathcal A^+$ includes the well-formedness
conditions and acceptance of the evaluated semantic outcomes.
The additional coherence condition applies to other auxiliary
coordinates retained in the working representation.
A search may examine trial values inconsistent with their
defining rules, but an accepted correction must satisfy those
rules wherever applicable.
Semantic-edit outcomes are always recomputed from candidate
inputs, as required by
Equation~\eqref{eq:augmented-modal-record}.
For example,
$C(y\mid x)=\sum_{j\in J}c_j\mathbf1\{y_j\neq x_j\}$
measures weighted changes over a declared set $J$ of charged
coordinates. Unit costs give their number; value-dependent
costs can also represent replacement plausibility.
The detailed cost formulation is developed in
Subsection~\ref{subsec:minimum-cost-correction}.
An admissible record with coherent auxiliaries witnesses that
the specified requirements can hold simultaneously.
For a particular observation $x$, however, it establishes
feasibility only if it also belongs to $\mathcal Y(x)$.
An accepted donor may therefore witness compatibility of the
specification without being a permitted replacement for a record
with different fixed values.
Likewise, compatibility does not guarantee an admissible
completion of every routing branch.
Failure to find a correction through a limited search does not
establish infeasibility.
The dependency structure determines how a proposed change
affects the remaining requirements. Replacing an upstream
modality may deactivate variables that must then be blanked,
or activate variables requiring additional values.
For a move applied to a current configuration $y$, an effective
cost may include both direct and induced operations:
\begin{equation}
\operatorname{Cost}(\mu\mid y)
 =c_i(y_i,a')+\operatorname{Cost}_{\mathrm{down}}(\mu\mid y),
\qquad \mu=(V_i:y_i\leftarrow a').
\label{eq:effective-cost-decomposition}
\end{equation}
Here $c_i(y_i,a')$ is the direct operation cost, and the induced
term depends on how the proposed move is completed.
This proposal-dependent cost is distinct from $C(y\mid x)$,
which measures the final configuration relative to the original
observation. A sequence of moves need not have accumulated cost
equal to that final modification cost.
Semantic edits can couple several routing branches, so all
affected requirements must be re-evaluated after a change.
Canonical indices and auxiliary dependencies identify the
requirements whose truth values may change, provided that
all induced effects are included.
Inference evaluates consequences of a candidate configuration;
it does not select replacement values or minimize their cost.
The work required for validation depends on the number of
indexed requirements inspected and the cost of evaluating the
semantic and auxiliary predicates.
A direct scan is linear in the number of inspected diagnostic
occurrences when each occurrence takes bounded work, with
predicate-evaluation costs accounted for separately.
This does not bound the number of candidate evaluations needed
to find a correction.

\subsection{Hamiltonian Formulation of Data Correction}
\label{subsec:hamiltonian-correction}
The logical characterization provides an exact energetic
formulation of admissibility. Positive penalties applied to
the complete canonical diagnostics translate their joint
absence into a zero-energy condition under the prescribed
record semantics.
Let $x$ be the observed record. Throughout this and the next
subsection, $\mathcal Y(x)$ is a nonempty finite candidate set
respecting the prescribed value domains, fixed fields, modal
encoding, and root and predecessor requirements defining
$\mathcal X_G$. Restricting candidates to $\mathcal X_G$
excludes no admissible correction, since these conditions
are already necessary for membership in $\mathcal A^+$.
Semantic-edit outcomes are evaluated deterministically from
candidate inputs. Other endogenous auxiliaries may be retained
as trial coordinates, with disagreements from their defining
rules diagnosed explicitly.
Their coherence is required of an admissible correction,
although it need not hold at every trial configuration.
Externally supplied or mode-locked auxiliaries retain their
prescribed active values and may be blanked only when permitted
by routing. These restrictions belong to $\mathcal Y(x)$;
missing external information is not freely imputed to remove
a violation.
Partition the diagnostic counts into $N_R(y)$ for routing
violations on ordinary variables, $N_L(y)$ for semantic-edit
violations, and $N_A(y)$ for auxiliary-variable violations.
The last component includes routing failures involving
auxiliaries and disagreements with endogenous defining rules.
For a present auxiliary, a disagreement is counted whenever
no routing failure is already reported for that auxiliary.
A missing auxiliary required to be active is diagnosed by
routing; one legitimately blanked by routing is not penalized
for lacking a derived value.
The defining rules and their applicability must be specified
so that every failure of required auxiliary coherence is
covered by one of these diagnostics.
Suppressing a separate coherence count when a routing failure
is already present can change positive counts, but cannot
conceal incoherence at zero total count.
Under the accepted-outcome convention $A_q=\{1,2\}$,
\begin{equation}
N_L(y)=\sum_{q=1}^{m}
 \mathbf1\{e^{(q)}(y)=\mathrm{NULL}\}.
\label{eq:hamiltonian-semantic-count}
\end{equation}
The routing contributions are obtained from the indexed
indicators in Equation~\eqref{eq:row-variable-localization},
partitioned according to whether the affected coordinate is
an ordinary variable or an auxiliary.
Counts refer to diagnostic occurrences, so several violations
may involve the same variable. Define
\begin{equation}
N(y)=N_R(y)+N_L(y)+N_A(y),\qquad
\mathcal F(x)=\{y\in\mathcal Y(x):N(y)=0\}.
\label{eq:hamiltonian-feasible-set}
\end{equation}
By Theorem~\ref{thm:semantic-reduction} and the auxiliary
coherence checks, $\mathcal F(x)$ is precisely the feasible
set of Equation~\eqref{eq:admissible-correction-problem}.
This identification relies on complete diagnostics and the
declared candidate domain. For candidates outside
$\mathcal X_G$, root and predecessor failures must also be
excluded or counted before the same conclusion applies.
For strictly positive weights $w_R,w_L,w_A$, define
\begin{equation}
H_{\mathrm{viol}}(y)
 =w_RN_R(y)+w_LN_L(y)+w_AN_A(y).
\label{eq:violation-hamiltonian}
\end{equation}
The weights determine the relative penalties of the diagnostic
classes. Their positive magnitudes can change the ordering of
inconsistent candidates, but do not change the zero set.
\begin{proposition}[Zero energy and global optimality]
\label{prop:hamiltonian-zero-penalty}
Under the stated candidate-domain and diagnostic assumptions,
suppose $w_R,w_L,w_A>0$.
Then $H_{\mathrm{viol}}(y)\geq0$ for every
$y\in\mathcal Y(x)$, and
\begin{equation}
H_{\mathrm{viol}}(y)=0
\quad\Longleftrightarrow\quad y\in\mathcal F(x).
\label{eq:violation-zero-admissibility}
\end{equation}
If $\mathcal F(x)\neq\varnothing$, then
\begin{equation}
\begin{aligned}
\min_{y\in\mathcal Y(x)}H_{\mathrm{viol}}(y)&=0,\\
\operatorname*{arg\,min}_{y\in\mathcal Y(x)}
 H_{\mathrm{viol}}(y)&=\mathcal F(x).
\end{aligned}
\label{eq:violation-global-minima}
\end{equation}
\end{proposition}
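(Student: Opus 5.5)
The argument is elementary once the diagnostic counts are recognized as nonnegative integers. I would proceed in three steps, relying on the identification of $\mathcal F(x)$ with the feasible set of Equation~\eqref{eq:admissible-correction-problem} established just before the statement.

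\textbf{Nonnegativity.} Each of $N_R(y)$, $N_L(y)$ and $N_A(y)$ is a finite sum of indicator values: the routing indicators $d_{u,j}$ of Equation~\eqref{eq:row-variable-localization}, the semantic indicators of Equation~\eqref{eq:hamiltonian-semantic-count}, and the auxiliary coherence indicators. Each count is therefore a nonnegative integer. Multiplying by $w_R,w_L,w_A>0$ and summing gives $H_{\mathrm{viol}}(y)\geq0$ for every $y\in\mathcal Y(x)$.

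\textbf{The equivalence \eqref{eq:violation-zero-admissibility}.} A sum of nonnegative terms vanishes if and only if every term vanishes. Since the weights are strictly positive, $w_RN_R(y)=0$ holds exactly when $N_R(y)=0$, and likewise for the other two classes. Hence $H_{\mathrm{viol}}(y)=0$ is equivalent to $N(y)=0$, which by Equation~\eqref{eq:hamiltonian-feasible-set} means $y\in\mathcal F(x)$. Strict positivity is used only here: with a zero weight, a class of violations could persist at zero energy.

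\textbf{Global minimization.} Suppose $\mathcal F(x)\neq\varnothing$. Every $y\in\mathcal F(x)$ attains the value $0$, and $0$ is a lower bound on $\mathcal Y(x)$, so the minimum exists and equals $0$. The argmin is then exactly the zero set, which equals $\mathcal F(x)$ by the previous step. Finiteness of $\mathcal Y(x)$ is not even needed here, since a member of $\mathcal F(x)$ attains the bound.

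The only substantive point is the meaning of $\mathcal F(x)$ in the statement. This step is inherited from the stated candidate-domain and diagnostic assumptions rather than reproved. It requires that $\mathcal Y(x)\subseteq\mathcal X_G$, so that Theorem~\ref{thm:semantic-reduction} applies to $y^+$. It also requires that every auxiliary coherence failure be covered by some counted diagnostic; suppression of coherence counts only occurs when a routing failure is already counted, so zero total count still forces coherence. I would state this dependence explicitly and leave the remaining steps as one-line arguments.
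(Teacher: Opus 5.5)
Your proof is correct and follows essentially the same route as the paper's: nonnegative counts with strictly positive weights reduce $H_{\mathrm{viol}}(y)=0$ to $N(y)=0$, and when $\mathcal F(x)\neq\varnothing$ its members attain the lower bound $0$, so the minimizers are exactly $\mathcal F(x)$. Like the paper, you take the identification of $\mathcal F(x)$ with the feasible set of the correction problem from the candidate-domain and diagnostic-completeness assumptions rather than reproving it.
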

\begin{proof}
The diagnostic counts are nonnegative integers.
Their positively weighted sum vanishes exactly when all
counts vanish. By the logical characterization and auxiliary
coherence checks, this is equivalent to feasibility.
If $\mathcal F(x)$ is nonempty, its members attain the lower
bound zero. Every global minimizer must therefore have zero
energy and belong to $\mathcal F(x)$.
\end{proof}
The proposition connects the logical models of the specification,
with the prescribed auxiliary coherence, to the global minimizers
of the violation energy.
A verified zero-energy candidate supplies both an admissible
correction and a constructive witness of feasibility under
the restrictions applying to $x$. This certificate is independent
of the method used to find the candidate.
If $\mathcal F(x)$ is empty, the minimum is strictly positive,
since $\mathcal Y(x)$ is finite and nonempty.
Any change between candidates that preserves admissibility
preserves zero violation energy. Conversely, a change that
introduces a diagnosed violation makes the energy positive.
This invariance describes the admissible zero set; it does not
assert dynamical stability or convergence of a correction
procedure.
Strict positivity of the weights ensures that no diagnostic
class is ignored. With a zero weight, vanishing energy need
not establish feasibility, although $N(y)=0$ remains a valid
explicit check.

\subsection{Minimum-Cost Admissible Correction}
\label{subsec:minimum-cost-correction}

The violation Hamiltonian identifies admissible corrections.
Minimum intervention selects among them by measuring changes
relative to the original observed record.
Retain the candidate domain $\mathcal Y(x)$ and feasible set
$\mathcal F(x)$ defined above, including their restrictions
and auxiliary-coherence requirements.
Let $J$ be the coordinates charged for modification, and define
\begin{equation}
C(y\mid x)=\sum_{j\in J}c_j(x_j,y_j),
\qquad
c_j(a,b)\geq0,\quad c_j(a,a)=0,
\label{eq:minimum-cost-record-cost}
\end{equation}
with finite coordinate costs.
The choice of $J$ specifies which changes count as intervention.
Strictly fixed fields need not be charged, while deterministically
recomputed coordinates may be excluded when their changes are
regarded as consequences of modifying their inputs.
Directly adjustable endogenous auxiliaries may be included.
These choices affect the cost, but do not remove any requirement
for admissibility or auxiliary coherence.
A useful specification distinguishes blanking, filling, and
replacement of a present value. For finite nonnegative
penalties $c_b,c_f,c_v$, take
\begin{equation}
\begin{aligned}
C(y\mid x)=\sum_{j\in J}\bigl[
 &c_b\mathbf1\{x_j\neq\mathrm{NULL},
                 y_j=\mathrm{NULL}\}\\
 &+c_f\mathbf1\{x_j=\mathrm{NULL},
                  y_j\neq\mathrm{NULL}\}\\
 &+c_v\mathbf1\{x_j\neq\mathrm{NULL},
                  y_j\neq\mathrm{NULL},y_j\neq x_j\}
\bigr].
\end{aligned}
\label{eq:hamiltonian-change-cost}
\end{equation}
Unit penalties count modified coordinates in $J$; unequal
penalties measure weighted intervention.
Induced changes contribute whenever they affect a charged
coordinate. The cost compares the final candidate with $x$,
rather than accumulating intermediate move costs, and is
generally asymmetric when $c_b\neq c_f$.
It is therefore distinct from the proposal-dependent cost in
Equation~\eqref{eq:effective-cost-decomposition}.
If $\mathcal F(x)\neq\varnothing$, minimum-cost admissible
correction is the problem
\begin{equation}
C^\star(x)=\min_{y\in\mathcal F(x)}C(y\mid x).
\label{eq:minimum-cost-constrained-problem}
\end{equation}
A minimizer exists because $\mathcal F(x)$ is finite and
nonempty, but it need not be unique.
By Proposition~\ref{prop:hamiltonian-zero-penalty}, the same
problem can be written as
\begin{equation}
\min_{y\in\mathcal Y(x)}C(y\mid x)
\quad\text{subject to}\quad H_{\mathrm{viol}}(y)=0.
\label{eq:minimum-change-zero-energy}
\end{equation}
If $\mathcal F(x)$ is empty, no admissible correction exists
under the specified restrictions.
To incorporate both objectives into a scalar energy, define
\begin{equation}
H_\lambda(y\mid x)
 =H_{\mathrm{viol}}(y)+\lambda C(y\mid x),
\qquad \lambda>0.
\label{eq:correction-hamiltonian}
\end{equation}
On $\mathcal F(x)$, this energy equals $\lambda C(y\mid x)$,
so its minimizers within the feasible set are exactly the
minimum-cost admissible corrections.
Minimization over the whole candidate domain requires a
further condition: an inconsistent candidate with low
modification cost could otherwise have smaller total energy
than every admissible correction.
\begin{theorem}[Exact penalty for minimum-cost correction]
\label{thm:hamiltonian-exact-penalty}
Under the assumptions of
Proposition~\ref{prop:hamiltonian-zero-penalty}, suppose
$\mathcal F(x)\neq\varnothing$.
Let $\delta>0$ satisfy
\begin{equation}
H_{\mathrm{viol}}(y)\geq\delta
\quad\text{for every }
y\in\mathcal Y(x)\setminus\mathcal F(x).
\label{eq:minimum-cost-violation-gap}
\end{equation}
If $y_0\in\mathcal F(x)$ and $\lambda>0$ satisfy
\begin{equation}
\lambda C(y_0\mid x)<\delta,
\label{eq:minimum-cost-penalty-condition}
\end{equation}
then
\begin{equation}
\operatorname*{arg\,min}_{y\in\mathcal Y(x)}
 H_\lambda(y\mid x)
 =
\operatorname*{arg\,min}_{y\in\mathcal F(x)}
 C(y\mid x),
\label{eq:hamiltonian-exact-minima}
\end{equation}
and
\begin{equation}
\min_{y\in\mathcal Y(x)}H_\lambda(y\mid x)
 =\lambda C^\star(x).
\label{eq:minimum-cost-optimal-energy}
\end{equation}
\end{theorem}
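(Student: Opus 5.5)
The plan is a direct comparison argument. It splits the candidate domain $\mathcal Y(x)$ into the feasible set $\mathcal F(x)$ and its complement, and shows that every infeasible candidate has strictly larger energy than the best feasible one.

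First, on $\mathcal F(x)$, Proposition~\ref{prop:hamiltonian-zero-penalty} gives $H_{\mathrm{viol}}(y)=0$. Hence $H_\lambda(y\mid x)=\lambda C(y\mid x)$ there. Since $\lambda>0$, the minimizers of $H_\lambda$ restricted to $\mathcal F(x)$ are exactly the minimizers of $C$ over $\mathcal F(x)$, and the restricted minimum equals $\lambda C^\star(x)$. This minimum is attained because $\mathcal F(x)$ is finite and nonempty.

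Second, I would bound the energy of any $y\in\mathcal Y(x)\setminus\mathcal F(x)$ from below. The coordinate costs satisfy $c_j\ge 0$, so $C(y\mid x)\ge 0$, and the gap assumption \eqref{eq:minimum-cost-violation-gap} gives $H_\lambda(y\mid x)\ge H_{\mathrm{viol}}(y)\ge\delta$. Condition \eqref{eq:minimum-cost-penalty-condition} and the minimality of $C^\star(x)$ over $\mathcal F(x)$, which contains $y_0$, then yield the chain
\[
\lambda C^\star(x)\le\lambda C(y_0\mid x)<\delta\le H_\lambda(y\mid x).
\]
So every infeasible candidate has energy strictly above the feasible minimum. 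Consequently the global minimum of $H_\lambda$ over $\mathcal Y(x)$ is attained only inside $\mathcal F(x)$. It equals $\lambda C^\star(x)$, which proves \eqref{eq:minimum-cost-optimal-energy}. The global argmin coincides with the argmin over $\mathcal F(x)$ from the first step, which proves \eqref{eq:hamiltonian-exact-minima}.

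There is no serious obstacle. The only point needing care is that the inequality in the chain is strict, so that no infeasible candidate can tie with a minimum-cost feasible one. The strict inequality comes from \eqref{eq:minimum-cost-penalty-condition} applied to the particular $y_0$, transferred to $C^\star(x)$ through $C^\star(x)\le C(y_0\mid x)$.

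As a side remark, finiteness of $\mathcal Y(x)$ together with strictly positive weights and integer counts guarantees that some $\delta$ exists, for instance $\delta=\min\{w_R,w_L,w_A\}$. The theorem, however, takes $\delta$ as given, so this remark is not needed for the proof.
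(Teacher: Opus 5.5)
Your proposal is correct and follows essentially the same route as the paper: split $\mathcal Y(x)$ into $\mathcal F(x)$ and its complement, and use nonnegativity of $C$ together with the gap $\delta$ to show that every infeasible candidate has energy at least $\delta>\lambda C(y_0\mid x)$; on $\mathcal F(x)$ the energy reduces to $\lambda C(y\mid x)$ with $\lambda>0$. Your explicit chain through $C^\star(x)\le C(y_0\mid x)$ only makes more visible the paper's step that no infeasible candidate can be a global minimizer, and hence that the minimum equals $\lambda C^\star(x)$.
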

\begin{proof}
For every infeasible candidate $y$, nonnegativity of the
modification cost gives
$H_\lambda(y\mid x)\geq H_{\mathrm{viol}}(y)\geq\delta$.
The feasible candidate $y_0$ instead satisfies
$H_\lambda(y_0\mid x)=\lambda C(y_0\mid x)<\delta$.
Consequently, no infeasible candidate can minimize
$H_\lambda$ globally.
On $\mathcal F(x)$, the violation term vanishes and
$H_\lambda(y\mid x)=\lambda C(y\mid x)$.
Since $\lambda>0$, minimizing this restriction is equivalent
to minimizing $C$ over $\mathcal F(x)$, proving both claims.
\end{proof}
For Equation~\eqref{eq:violation-hamiltonian}, one may take
$\delta=w_{\min}:=\min\{w_R,w_L,w_A\}$, since an infeasible
candidate manifests at least one counted violation.
If a finite upper bound $\overline C(x)$ for the modification
cost is available on $\mathcal Y(x)$, the condition
\begin{equation}
\lambda\overline C(x)<w_{\min}
\label{eq:hamiltonian-weight-condition}
\end{equation}
is sufficient whenever $\mathcal F(x)$ is nonempty.
For Equation~\eqref{eq:hamiltonian-change-cost}, one may take
$\overline C(x)=|J|\max\{c_b,c_f,c_v\}$, since each coordinate
contributes at most one elementary cost.
Using the cost of one feasible correction can give a less
restrictive condition than using this uniform upper bound.
These conditions are sufficient, not necessary.
The theorem establishes when the global minimizers of the
total Hamiltonian are precisely the minimum-cost admissible
corrections. Its minimum is $\lambda C^\star(x)$, which may
be positive even though every minimizing configuration has
zero violation energy.
For any $\lambda>0$ and $y\in\mathcal Y(x)$, nonnegativity
also gives
\begin{equation}
H_\lambda(y\mid x)=0
\quad\Longleftrightarrow\quad
y\in\mathcal F(x)\ \text{and}\ C(y\mid x)=0.
\label{eq:hamiltonian-zero-energy}
\end{equation}
Lexicographic minimization of $H_{\mathrm{viol}}$ first and
$C$ second yields the same constrained minimizers whenever
$\mathcal F(x)$ is nonempty, without a scalar weight condition.
Neither formulation creates feasibility when
$\mathcal F(x)$ is empty.
Finally, a minimum-cost certificate can be obtained by matching
a valid lower bound. If $L(x)\leq C(y\mid x)$ for every
$y\in\mathcal F(x)$, then any admissible correction
$\widehat y\in\mathcal F(x)$ satisfies
\begin{equation}
L(x)\leq C^\star(x)\leq C(\widehat y\mid x).
\label{eq:minimum-cost-certificate}
\end{equation}
Equality $C(\widehat y\mid x)=L(x)$ therefore certifies
minimum modification cost.
The coverage conditions in
Subsection~\ref{subsec:localization-correction}
provide one source of such bounds when their weights satisfy
the stated cost relation.
An elementary case is an initially violated implication
$V_i=a\Rightarrow V_j$ involving two directly represented
coordinates charged for modification.
Every admissible correction must change at least one of them.
If each permitted change to either coordinate costs at least
$\eta>0$, then $C^\star(x)\geq\eta$ whenever a feasible
correction exists. An admissible correction of cost $\eta$
attains this bound and is therefore minimum-cost.
Changes confined to uncharged coordinates cannot remove this
particular violation while its two coordinates remain unchanged.
A verified zero-energy correction alone establishes
admissibility and global optimality for violation minimization.
Certification of minimum intervention additionally requires
an attained lower bound or another global optimality argument.
These certificates concern the returned configurations and
do not require a convergence guarantee for the numerical
strategy used to obtain them.


\section{Empirical Validation}
\label{sec:empirical-validation}

The empirical study applies the framework to an anonymized
extract from a real household survey. It follows the construction
developed in Section~\ref{sec:mathematical-framework}, from
structural metadata and rule generation to violation localization
and correction. The experiments examine agreement between
implemented representations and verify corrected records against
the specified requirements. A synthetic experiment checks
localization under controlled alterations, while a retrospective
audit applies the minimum-cost certificate to returned corrections.

\subsection{Case Study and Validation Design}
\label{subsec:empirical-design}

The input contains $15\,000$ household records and $332$ columns,
including a record identifier and a five-category territorial
partition. It was selected from a working dataset of $21\,470$
records by retaining all $4\,842$ records classified as consistent
by the extended routing check and sampling $10\,158$ inconsistent
records proportionally across province-code groups, with seed
$123$. The original row order was retained.
Since selection depends on the initial consistency classification,
the reported violation rates describe this validation sample,
rather than population error prevalence.
Four metadata files specify routing, auxiliary-variable
construction, logical edits, and item mappings.
Three external sources supply information for exogenous
auxiliaries. Two cover all selected households; the third covers
$14\,988$, leaving $12$ unmatched identifiers already absent
from the source data. These gaps are retained during preparation.
Preparation constructs auxiliary variables and logical-group
representations and aligns the resulting $255$ columns with
the canonical routing graph. This representation includes
$30$ OR/XOR group nodes and three exogenous auxiliary variables.
Two additional nodes subsequently encode semantic edits.
The $332$ input columns, $255$ routing variables, and $257$
extended variables therefore refer to successive representations
of the data.

\subsection{Canonical Representation and Rule Generation}
\label{subsec:empirical-canonicity}

The normalized routing graph contains $255$ vertices and $332$
distinct directed edges. It is acyclic, has one source node
and no isolated nodes, and the partitioning routine returns
four computational blocks.
The stored modal incidence representation contains $572$ rows.
Each has exactly one source, at least one destination, a unique
modal identifier, and at least one generated implication.
All six implemented canonical tests pass.
The implementation uses an implicit terminal convention.
The final variable is retained as a column, but the constructor
does not allocate outgoing modal rows to it.
In the theoretical representation of
Subsection~\ref{subsec:canonical-routing}, these events can be
included as transitions to $\mathrm{END}$ with empty
inactive-variable sets. Their clause blocks are empty, so adding
their indices preserves the generated validation requirements.
The reported $572$ rows count stored nonterminal transitions;
they do not include these terminal-event indices.
The rule generator writes $572$ compatibility rules, which expand
into $1\,008$ distinct atomic implications:
$572$ require a target to be filled and $436$ require a skipped
variable to be blank. Parsing the saved rule file reproduces
exactly this atomic rule set, with no unmatched implications.
The variable-domain and fixed-width layout files each contain
$255$ entries aligned with the routing representation.
Table~\ref{tab:empirical-structure-horn} reports the structural
counts and the logical checks discussed here and below.
\begin{table}[htbp]
\centering
\caption{Canonical structure and logical implementation checks.}
\label{tab:empirical-structure-horn}
\begin{tabular}{lr}
\toprule
Quantity or check & Result \\
\midrule
Routing vertices & $255$ \\
Distinct directed edges & $332$ \\
Stored nonterminal modal rows & $572$ \\
Canonical structural tests passed & $6/6$ \\
Atomic target-filled implications & $572$ \\
Atomic skip-blank implications & $436$ \\
Total elementary implications & $1\,008$ \\
Equality with parsed compatibility rules & Exact \\
Deletion tests returning non-derivability & $1\,008/1\,008$ \\
Operational implications, including bridges & $1\,580$ \\
Additional implications after materialization & $0$ \\
Singleton inference comparisons & $977$ \\
Random multiple-seed comparisons & $5\,000$ \\
Inference disagreements & $0$ \\
\bottomrule
\end{tabular}
\end{table}
These checks examine the correspondence between stored canonical
events and generated clauses described in
Theorem~\ref{thm:canonical-horn-representation}.
They establish agreement between the implemented representations,
conditional on the supplied metadata.
The structural tests concern stored rows; the terminal-event
indices remain implicit under the stated convention.

\subsection{Deletion Tests and Operational Closure}
\label{subsec:empirical-minimality}

The deletion experiment examines the elementary rule set
separately from its indexed structural representation.
Each of the $1\,008$ implications was removed in turn.
Forward inference from its antecedent using the remaining
elementary implications did not recover its conclusion in any
case. Technical value-to-filled bridges were excluded.
Thus all $1\,008$ deletion tests returned non-derivability under
the adopted procedure.
This result concerns inference in that operational encoding.
It does not establish logical irredundancy relative to
well-formed records, whose criterion is given in
Proposition~\ref{prop:witness-irredundancy}.
In particular, the relative entailment of skip requirements
described in Subsection~\ref{subsec:minimality-closure-free}
uses the record-domain conditions, which these deletion tests
do not encode.
For the closure comparison, $572$ technical implications linking
an observed modal value to the filled state of its variable
were added. The operational basis then contained $1\,580$
implications. Materializing reachability under the adopted
forward-inference procedure returned the same $1\,580$
implications, with no newly derived implications.
The basis and its materialized operational closure produced
identical consequences for all $5\,977$ tested fact sets:
$977$ exhaustive singleton seeds and $5\,000$ random
multiple-seed sets, generated with seed $123$ and a maximum
size of eight.
The result reflects the encoding: elementary antecedents are
modal value literals, while conclusions are filled or blank
literals, so these rules do not themselves generate further
modal values.
The experiment verifies agreement of the two inference
implementations on the tested inputs. Its finite operational
closure is distinct from $\operatorname{Cn}(\mathcal H^+)$.
The general justification for validation through the generating
theory is supplied by
Theorem~\ref{thm:closure-free-consistency}.

\subsection{Size of the Admissible Routing-Pattern Space}
\label{subsec:empirical-patterns}

The canonical representation also supports counting routing
patterns without enumerating complete records.
Let $m_i$ be the number of encoded modalities used by the
counting procedure for routing variable $i$.
Allowing those modalities and a blank state gives the unrestricted
reference space
$N_{\mathrm{total}}=\prod_{i=1}^{255}(m_i+1)$.
The admissible count is obtained by propagating filled and blank
requirements, with the source variable required to be filled.
Arbitrary-precision integer arithmetic and memoization yield
\begin{equation}
\begin{aligned}
N_{\mathrm{total}} &\simeq 2.813385\times 10^{118},\\
N_{\mathrm{adm}}   &\simeq 2.999369\times 10^{52},\\
N_{\mathrm{adm}}/N_{\mathrm{total}}
                  &\simeq 1.066107\times 10^{-66}.
\end{aligned}
\label{eq:empirical-pattern-counts}
\end{equation}
The calculation visits $474$ memoized states.
The counts describe patterns under the implemented modal
encoding and activation conventions, before adding the two
semantic edits. They do not enumerate all numerical values
in quantitative domains or all combinations of the original
$332$ columns. In particular, routing categories used for this
count may group values distinguished by semantic predicates.
The ratio measures the combinatorial restriction imposed by
routing, rather than the probability of observing a consistent
household.

\subsection{Routing and Semantic Violation Localization}
\label{subsec:routing-localization}

On the $15\,000$ observed records, the routing check reports
$16\,093$ incompatibilities involving $10\,153$ distinct
households and $63$ variables.
Localization returns the same incompatibilities and record count,
reusing the saved routing check.
This establishes agreement between workflow stages.
As in Subsection~\ref{subsec:canonical-localization}, the output
identifies violated requirements and affected variables.
The extended stored representation adds two logical-edit nodes,
giving $257$ vertices, $334$ edges, and $575$ modal rows,
while preserving acyclicity and data alignment.
These are counts of the stored graph and modal representation.
They must be distinguished from the functional-dependency edges
and forbidden-outcome rows defined in
Subsection~\ref{subsec:semantic-embedding}:
a stored modal row is not necessarily a semantic rejection row,
and the stored edge count does not enumerate all input
dependencies of the semantic predicates.
The first edit checks distinctness of nonblank entries in a
six-field list and requires its first entry when the list is
active. The second requires six associated percentage fields
to sum to $100$, treating blank components as zero.
The percentage edit uses the recorded numerical values.
For both edits, the all-blank configuration is explicitly
accepted as outcome $2$, and an accepted active configuration
yields outcome $1$. Records satisfying neither accepted
condition yield \texttt{NA}, represented below by
$\mathrm{NULL}$.
These are the specific evaluation conventions of the two
supplied predicates, including their treatment of blank inputs.
\begin{table}[htbp]
\centering
\caption{Evaluation of the two logical edits on observed records.}
\label{tab:semantic-elog-outcomes}
\begin{tabular}{lrrr}
\toprule
Edit node & Outcome $1$ & Outcome $2$ & $\mathrm{NULL}$ \\
\midrule
\texttt{LEDIT\_1} & $10\,864$ & $4\,118$ & $18$ \\
\texttt{LEDIT\_2} & $10\,125$ & $4\,118$ & $757$ \\
\bottomrule
\end{tabular}
\end{table}
The $775$ forbidden outcomes in
Table~\ref{tab:semantic-elog-outcomes} correspond to $775$
semantic violations in the extended report.
Together with the $16\,093$ routing violations, they give
$16\,868$ incompatibilities on $10\,158$ records, leaving
$4\,842$ initially consistent records.
Only five additional households are classified as inconsistent
relative to routing alone, reflecting the overlap between
routing and semantic conflicts.
The semantic count refers to record--edit pairs, not distinct
households. The agreement verifies integration of the evaluated
predicates into the diagnostics, as formulated in
Subsection~\ref{subsec:semantic-embedding}; it does not provide
independent semantic ground truth.
A complementary synthetic experiment generates $5\,000$
artificial records from the same routing metadata, with seed
$123$. The routing check accepts all of them with zero violations.
The procedure then attempts alterations in $2\,500$ records,
using seed $124$. Alterations not detected by the checker are
reverted, leaving $2\,265$ records with retained detectable
alterations and $2\,735$ records classified as admissible.
The resulting dataset contains $2\,683$ routing incompatibilities.
\begin{table}[htbp]
\centering
\caption{Record-level localization on the filtered synthetic test.}
\label{tab:soundness-records}
\begin{tabular}{lrr}
\toprule
Reference status & Consistent & Inconsistent \\
\midrule
Admissible & $2\,735$ & $0$ \\
Retained detectable alteration & $0$ & $2\,265$ \\
\bottomrule
\end{tabular}
\end{table}
Localization reproduces every reference label in
Table~\ref{tab:soundness-records}, with no classification
disagreements.
Because the checker determines which alterations are retained,
this result measures conditional agreement between reference
construction, checking, and localization.
It does not estimate detection accuracy against independently
established errors or identify the originally altered cell.
The $2\,683$ conflicts exceed the $2\,265$ altered records
because an alteration may violate several requirements.

\subsection{Empirical Verification of Zero-Energy Corrections}
\label{subsec:empirical-correction}

The correction experiment uses the canonical diagnostics to guide
modifications of all $10\,158$ initially inconsistent records.
Its purpose is to obtain and validate feasible configurations
under the adopted restrictions, illustrating the energetic
characterization in
Subsection~\ref{subsec:hamiltonian-correction}.
The design and convergence analysis of the numerical search
are outside the scope of this study.
The donor pool contains the $4\,842$ initially consistent records
and remains fixed throughout correction.
Feasibility for an individual inconsistent record depends on
its immutable values and protected external information,
as represented by $\mathcal Y(x)$.
Correction uses \texttt{STRATUM}, the five-category territorial
partition comprising North-West, North-East, Centre, South,
and Islands. This differs from the finer province grouping
used in sample selection.
The record identifier and stratum are immutable, and additional
restrictions protect exogenous auxiliary information.
Candidate modifications respect the prescribed domains and
editability restrictions.
Endogenous auxiliary coherence is evaluated against candidate
inputs, and semantic-edit outcomes are recomputed from them.
In particular, LEDIT outcomes are never assigned independently:
accepted outcomes are $1$ and $2$, while \texttt{NA}
constitutes a violation.
Execution uses six workers and $102$ batches of at most $100$
records, without reusing previous checkpoints.
The returned records undergo canonical routing and semantic
validation, combined with auxiliary-coherence diagnostics
to determine final resolution status.
Table~\ref{tab:empirical-correction} reports the results.
\begin{table}[htbp]
\centering
\caption{Correction results for the complete validation sample.}
\label{tab:empirical-correction}
\begin{tabular}{lr}
\toprule
Measure & Result \\
\midrule
Initially consistent donor records & $4\,842$ \\
Initially inconsistent records processed & $10\,158$ \\
Records resolved & $10\,146$ \\
Records unresolved & $12$ \\
Violations before correction & $16\,868$ \\
Violations after correction & $12$ \\
Cells changed & $29\,085$ \\
Blank-to-value changes & $24\,218$ \\
Value-to-blank changes & $3\,290$ \\
Replacements of nonblank values & $1\,577$ \\
Reported correction time (seconds) & $1\,394.82$ \\
\bottomrule
\end{tabular}
\end{table}
For each of the $10\,146$ resolved records, all three diagnostic
components vanish: routing, semantic-edit, and auxiliary-variable
violations.
A complementary audit of all these records confirms root
activity, modal encoding, and unique predecessor support
under the adopted routing encoding, with no unsupported
present variables or missing reached mandatory variables.
Together with the final diagnostics and correction restrictions,
these checks establish admissibility under the stated record
semantics.
Proposition~\ref{prop:hamiltonian-zero-penalty} therefore
identifies each output as a correction with
$H_{\mathrm{viol}}=0$, attaining the global violation minimum
within its permitted candidate set.
Minimum modification cost requires the additional argument
examined in Subsection~\ref{subsec:empirical-minimum-cost}.
The $12$ residual violations all concern a required but missing
value of the exogenous variable \texttt{VAUX6}.
Their household identifiers coincide exactly with the $12$
unmatched records in the relevant external source.
The adopted restrictions prevent filling these gaps by assigning
unsupported external values.
The residual diagnostics therefore identify an information gap;
they do not establish that every permitted alternative
correction is infeasible.
The $29\,085$ changed cells reported in
Table~\ref{tab:empirical-correction} represent $1.123\%$
of the report's denominator of $2\,590\,290$ cells across
processed records and columns not marked immutable.
This measure includes derived columns in the extended working
representation. It describes intervention across those columns,
rather than changes restricted to original questionnaire items
or the charged cost used in the following audit.
The reported correction time is approximately $23.25$ minutes.
It excludes the initial check and donor-cache preparation
and describes one execution, without providing a comparative
performance or scalability assessment.
The experiment verifies satisfaction of the specified
requirements. Recovery of true responses and preservation
of distributions, associations, and survey estimates require
additional evidence on statistical imputation quality.

\subsection{Minimum-Cost Certificates}
\label{subsec:empirical-minimum-cost}

A retrospective audit applies the sufficient lower-bound
criterion in Equation~\eqref{eq:minimum-cost-certificate}
to the returned corrections.
Costs are computed on $247$ charged coordinates, with filling,
blanking, and replacement penalties of $1$, $2$, and $3$,
respectively. The charged set excludes the identifier, immutable
coordinates, deterministically recalculated coordinates, and
mode-locked auxiliaries.
Among the resolved records, $493$ have cost equal to $1$.
For $469$ of these records, an initially violated canonical
implication involves an activating coordinate and a missing
required coordinate, both charged for modification.
Every admissible correction must change at least one of them.
Since any such change costs at least $1$, this establishes
the lower bound $C^\star(x)\geq1$.
The returned correction fills the required coordinate and
leaves all other charged coordinates unchanged.
Its admissibility is established by the checks in
Subsection~\ref{subsec:empirical-correction}, so it attains
the bound: $C^\star(x)=C(\widehat y\mid x)=1$.
These $469$ certificates are obtained without rerunning
correction. The remaining $24$ cost-one records also pass
the final checks but are not certified by this restricted
witness search; absence of a certificate does not imply
suboptimality.
The audit illustrates the lower-bound criterion.
It does not verify the sufficient weight condition of
Theorem~\ref{thm:hamiltonian-exact-penalty} or claim
cost optimality for all returned corrections.


\section{Discussion}
\label{sec:discussion}

The contribution of this study is the formal connection between
structural metadata, logical admissibility, and correction.
The canonical representation associates specified routing events
and semantic edits with their logical requirements, diagnostics,
and dependency scopes. This connection provides a common basis
for validating records, formulating correction objectives,
and certifying returned configurations.
Its value lies in making the assumptions behind these
certificates explicit and traceable to the specification.
Admissibility is relative to the prescribed domains, routing
transitions, semantic predicates, and auxiliary definitions.
The characterization requires root activity, predecessor support,
and the specified auxiliary evaluation and coherence.
The representation must also retain the information needed by
semantic predicates, even when routing alone distinguishes
fewer categories.
These conditions determine what the validation guarantees:
a record satisfies the supplied requirements, while substantive
errors outside their scope may remain undetected.
The framework uses finite categorical domains, which may
represent numerical values. The percentage fields in the
application illustrate arithmetic validation.
Imputing quantitative variables such as income, expenditure,
or consumption additionally requires suitable replacement
values and modification costs, together with an assessment
of statistical quality beyond the present case study.
Structural minimality concerns retention of each indexed event
and its provenance. It does not imply that every generated
clause is indispensable for characterizing admissible records.
Indeed, within the well-formed domain, destination requirements
and routing coherence imply the skip requirements, as explained
in Subsection~\ref{subsec:minimality-closure-free}.
Retaining these requirements explicitly remains useful for
localizing violations on unrestricted inputs.
The empirical deletion results concern non-derivability in the
adopted operational encoding and are compatible with this
relative logical redundancy.
This distinction also separates validation requirements from
their storage conventions, including the implicit terminal
events used in the application.
Completeness of the admissibility characterization permits
validation through the generating theory without materializing
its deductive closure.
Additional logical consequences cannot change the admissible
set, although counting their violations can change positive
energies. The same distinction applies to repeated diagnostic
occurrences and their weights: they affect penalties outside
the zero set while preserving that set when every requirement
is checked with a positive weight.
The choice of diagnostic representation therefore matters
for correction objectives even when consistency decisions agree.
Localization contributes more than a binary consistency decision.
The dependency scope of a conflict identifies coordinates
whose modification can affect its truth value, including
changes to activation conditions and derived inputs.
These scopes impose necessary coverage conditions on corrections
and can supply cost lower bounds when the coverage weights
respect the declared modification cost.
They do not identify a uniquely erroneous cell or ensure that
a proposed set of changes admits jointly valid replacement
values. Complete candidate records remain the objects against
which admissibility and intervention must be evaluated.
The energetic characterization separates verification of an
output from analysis of the procedure that produced it.
Under the stated assumptions, a verified zero-energy correction
attains the global minimum of the violation Hamiltonian and
witnesses feasibility within its permitted candidate set.
The application supplies such witnesses for $10\,146$ records,
with the required record-structure checks complementing routing,
semantic, and auxiliary diagnostics.
The residual external-information gaps identify unresolved
requirements under the adopted restrictions; they do not prove
that every permitted alternative correction is infeasible.
Minimum modification cost introduces a further selection among
admissible configurations. Its interpretation depends on the
charged coordinates and their penalties.
Theorem~\ref{thm:hamiltonian-exact-penalty} establishes sufficient
conditions for equivalence between a scalar penalized objective
and minimum-cost admissible correction.
Lexicographic optimization expresses the same priority without
a scalar weight condition. These formulations characterize
global minimizers; finding them remains a separate task.
The retrospective audit in
Subsection~\ref{subsec:empirical-minimum-cost} provides
$469$ certificates by matching attained costs with necessary
lower bounds. This demonstrates how a localized violation
can support a global cost-optimality statement for an individual
record, without a convergence guarantee for the search.
The audit establishes optimality for the certified outputs;
it does not test the sufficient weight condition of the
exact-penalty theorem or determine optimality of the remaining
corrections. Aggregate cell-change counts describe intervention
in the working representation and must be distinguished from
this explicitly defined charged cost.
The empirical evidence concerns one survey extract selected
partly through its initial consistency classification.
The synthetic alterations were likewise retained according
to their detectability by the checker.
The experiments therefore assess agreement of the implemented
components and illustrate the theoretical certificates.
They do not estimate population error prevalence or detection
accuracy against independent truth.
The semantic checks concern the supplied predicates, and
the reported runtime describes a single execution.
The framework's theoretical guarantees follow from the stated
assumptions and proofs, while the case study demonstrates
their application from metadata to verified corrections.
Comparative benchmarking is not required to establish these
results. The relative performance of correction procedures,
including computational efficiency, scalability, and convergence
behaviour, is a separate question reserved for a dedicated study.
Further work should extend cost certification to corrections
requiring several coordinated changes and examine additional
dependency structures and independent error benchmarks.
Tractable instances with known optima would help assess
corrections whose available lower bounds are not attained.
Sensitivity to costs and editability restrictions would clarify
how these choices affect the selected records.
Evaluation of corrected distributions, associations, and
survey estimates is also needed, since logical admissibility
and minimum intervention do not by themselves establish
the statistical accuracy of replacement values.


\section{Conclusion}
\label{sec:conclusion}

This study connects structural metadata to logical validation
and certified correction in statistical data editing.
A normalized routing specification and explicitly supplied
semantic edits generate a canonical indexed representation
and an associated Horn theory.
Under the stated record semantics, the generating requirements
characterize admissibility without materializing their deductive
closure. Their indices preserve diagnostic provenance, while
the distinction between structural minimality and logical
irredundancy clarifies what the representation retains.
With complete diagnostics and strictly positive weights,
the violation Hamiltonian vanishes exactly on the admissible
corrections within the permitted candidate domain.
A verified admissible output therefore certifies feasibility
and global minimization of violation energy.
Minimum modification cost selects among these configurations.
The framework establishes sufficient conditions for an equivalent
cost-augmented formulation and provides optimality certificates
through attained lower bounds.
The household survey application illustrates these results
through verified admissible corrections and minimum-cost
certificates for a subset of the returned records.
These certificates establish properties of individual outputs
without requiring a convergence guarantee for the correction
procedure. They do not establish cost optimality for every
returned record or the statistical accuracy of replacement
values.
The framework provides a traceable specification of admissibility
and explicit criteria for certifying correction outputs across
different numerical procedures.


\section*{Declarations}

\subsection*{Conflict of interest}
The author declares that there is no conflict of interest.

\subsection*{Funding}
The author received no specific funding for this research.

\subsection*{Data and code availability}
The data and code supporting the findings of this study
are available from the author upon reasonable request.


\end{document}